\documentclass[journal]{IEEEtran}

\usepackage{cite}
\usepackage{amsmath,amssymb,amsfonts}

\usepackage{graphicx}
\usepackage{textcomp}
\usepackage{xcolor}
\def\BibTeX{{\rm B\kern-.05em{\sc i\kern-.025em b}\kern-.08em
    T\kern-.1667em\lower.7ex\hbox{E}\kern-.125emX}}
    
\usepackage[dvipsnames]{xcolor}

\definecolor{colorblindfree3_1}{RGB}{252,141,89}
\definecolor{colorblindfree3_2}{RGB}{255,255,191}
\definecolor{colorblindfree3_3}{RGB}{145,191,219}

\definecolor{colorblindfree4_1}{RGB}{215,25,28}
\definecolor{colorblindfree4_2}{RGB}{253,174,97}
\definecolor{colorblindfree4_3}{RGB}{171,217,233}
\definecolor{colorblindfree4_4}{RGB}{44,123,182}

\definecolor{colorblindfree5_1}{RGB}{215,25,28}
\definecolor{colorblindfree5_2}{RGB}{253,174,97}
\definecolor{colorblindfree5_3}{RGB}{255,255,191}
\definecolor{colorblindfree5_4}{RGB}{171,217,233}
\definecolor{colorblindfree5_5}{RGB}{44,123,182}

\definecolor{colorblindfree6_1}{RGB}{215,48,39}
\definecolor{colorblindfree6_2}{RGB}{252,141,89}
\definecolor{colorblindfree6_3}{RGB}{254,224,144}
\definecolor{colorblindfree6_4}{RGB}{224,243,248}
\definecolor{colorblindfree6_5}{RGB}{145,191,219}
\definecolor{colorblindfree6_6}{RGB}{69,117,180}

\definecolor{colorblindfree7_1}{RGB}{215,48,39}
\definecolor{colorblindfree7_2}{RGB}{244,109,67}
\definecolor{colorblindfree7_3}{RGB}{253,174,97}
\definecolor{colorblindfree7_4}{RGB}{254,224,144}
\definecolor{colorblindfree7_5}{RGB}{224,243,248}
\definecolor{colorblindfree7_6}{RGB}{171,217,233}
\definecolor{colorblindfree7_7}{RGB}{116,173,209}

\definecolor{colorblindfree8_1}{RGB}{215,48,39}
\definecolor{colorblindfree8_2}{RGB}{244,109,67}
\definecolor{colorblindfree8_3}{RGB}{253,174,97}
\definecolor{colorblindfree8_4}{RGB}{254,224,144}
\definecolor{colorblindfree8_5}{RGB}{224,243,248}
\definecolor{colorblindfree8_6}{RGB}{171,217,233}
\definecolor{colorblindfree8_7}{RGB}{116,173,209}
\definecolor{colorblindfree8_8}{RGB}{69,117,180}

\definecolor{colorblindfree9_1}{RGB}{215,48,39}
\definecolor{colorblindfree9_2}{RGB}{244,109,67}
\definecolor{colorblindfree9_3}{RGB}{253,174,97}
\definecolor{colorblindfree9_4}{RGB}{254,224,144}
\definecolor{colorblindfree9_5}{RGB}{255,255,191}
\definecolor{colorblindfree9_6}{RGB}{224,243,248}
\definecolor{colorblindfree9_7}{RGB}{171,217,233}
\definecolor{colorblindfree9_8}{RGB}{116,173,209}
\definecolor{colorblindfree9_9}{RGB}{69,117,180}

\definecolor{colorblindfree10_1}{RGB}{165,0,38}
\definecolor{colorblindfree10_2}{RGB}{215,48,39}
\definecolor{colorblindfree10_3}{RGB}{244,109,67}
\definecolor{colorblindfree10_4}{RGB}{253,174,97}
\definecolor{colorblindfree10_5}{RGB}{254,224,144}
\definecolor{colorblindfree10_6}{RGB}{224,243,248}
\definecolor{colorblindfree10_7}{RGB}{171,217,233}
\definecolor{colorblindfree10_8}{RGB}{116,173,209}
\definecolor{colorblindfree10_9}{RGB}{69,117,180}
\definecolor{colorblindfree10_10}{RGB}{49,54,149}

\definecolor{colorblindfree11_1}{RGB}{165,0,38}
\definecolor{colorblindfree11_2}{RGB}{215,48,39}
\definecolor{colorblindfree11_3}{RGB}{244,109,67}
\definecolor{colorblindfree11_4}{RGB}{253,174,97}
\definecolor{colorblindfree11_5}{RGB}{254,224,144}
\definecolor{colorblindfree11_6}{RGB}{255,255,191}
\definecolor{colorblindfree11_7}{RGB}{224,243,248}
\definecolor{colorblindfree11_8}{RGB}{171,217,233}
\definecolor{colorblindfree11_9}{RGB}{116,173,209}
\definecolor{colorblindfree11_10}{RGB}{69,117,180}
\definecolor{colorblindfree11_11}{RGB}{49,54,149}

\usepackage[caption=false,font=normalsize,labelfont=sf,textfont=sf]{subfig}
\usepackage{pgfplots}
\pgfplotsset{compat=1.18, every axis plot/.append style={line width=0.8pt}}
\usepackage{bm}
\usepackage{enumitem}
\usepackage{bbm}
\usepackage[ruled, linesnumbered, vlined]{algorithm2e}
\usepackage{booktabs} 
\usetikzlibrary{spy,backgrounds}
\usepackage{placeins}
\usepackage{amssymb}
\usepackage{mathtools}
\usepackage{amsthm}
\usepackage[inkscapeformat=png]{svg}
\usepackage{nccmath}
\usepackage[acronym]{glossaries}
\usepackage{comment}
\usepackage{amsfonts}
\usepackage{bbm}
\usepackage[normalem]{ulem}

\DeclarePairedDelimiter\floor{\lfloor}{\rfloor}

\newacronym{sc}{SC}{successive cancellation}
\newacronym{scl}{SCL}{successive cancellation list}
\newacronym{ca}{CA}{cyclic-redundancy-check aided}
\newacronym{aed-sc}{AED-SC}{automorphism ensemble decoding with the successive cancellation constituent decoder}
\newacronym{crc}{CRC}{cyclic-redundancy-check}
\newacronym{pe}{PE}{perturbation-enhanced}
\newacronym{be}{BE}{bias-enhanced}
\newacronym{pm}{PM}{path metric}
\newacronym{rpe}{RPE}{random perturbation-enhanced}
\newacronym{fer}{FER}{frame error rate}
\newacronym{awgn}{AWGN}{additive white Gaussian noise}
\newacronym{bpsk}{BPSK}{binary phase shift keying}
\newacronym{qpsk}{QPSK}{quadratic phase shift keying}
\newacronym{bec}{BEC}{binary erasure channel}
\newacronym{bms}{BMS}{binary-input memoryless}
\newacronym{aga}{AGA}{approximated Gaussian approximation}
\newacronym{snr}{SNR}{signal-to-noise ratio}
\newacronym{llr}{LLR}{log-likelihood ratio}
\newacronym{gpscl}{GPSCL}{generalized partitioned successive cancellation list}
\newacronym{spc}{SPC}{single parity-check}
\newacronym{ida}{IDA}{input-distribution-aware}
\newacronym{ldpc}{LDPC}{low-density parity-check}
\newacronym{ml}{ML}{maximum likelihood}

\newtheorem{theorem}{Theorem}

\newtheorem{prop}{Proposition}

\newenvironment{customlegend}[1][]{%
    \begingroup
    \csname pgfplots@init@cleared@structures\endcsname
    \pgfplotsset{#1}%
}{%
    \csname pgfplots@createlegend\endcsname
    \endgroup
}%
\def\addlegendimage{\csname pgfplots@addlegendimage\endcsname}
    
\begin{document}

\title{Enhanced Successive Cancellation List Decoder for Long Polar Codes Targeting Air Interface\\
{
}
}

\author{
\IEEEauthorblockN{Jiajie Li, Sihui Shen and Warren J. Gross\\}
\thanks{Jiajie Li, Sihui Shen, and Warren J. Gross are with the Department of Electrical and Computer Engineering, McGill University, Montr{\'e}al, Qu{\'e}bec, H3A 0E9, Canada (e-mail: jiajie.li@mail.mcgill.ca; sihui.shen@mail.mcgill.ca;warren.gross@mcgill.ca).}
}

\maketitle

\begin{abstract}
Polar codes are the first codes with a proven capacity-achieving capability, but their decoding faces several challenges, especially under long code lengths.
In this paper, we target algorithmic improvements and analyses to enable the implementation of long polar codes (e.g., length 8K bits) by addressing key challenges in memory usage and computational complexity presented by successive cancellation list (SCL) polar decoding.
Perturbation-enhanced (PE) SCL decoders with a list size of $L$ reach the decoding performance of the SCL decoder with a list size of $2L$.
The proposed bias-enhanced (BE) SCL decoders,  which simplify the PE SCL decoder based on insights gained by an ablation study, return similar decoding performance to PE SCL decoders.
Also,  proposed BE generalized partitioned SCL (GPSCL) decoders with a list size of $8$ have a $67\%$ reduction in the memory usage and similar decoding performance compared to SCL decoders with a list size of $16$, and it demonstrates that an accurate bias can be generated under a reduced number of codewords from the list and reduces the overhead from $\left(L-1\right)n$ XOR gates plus $n$ priority encoders to $n$ XOR gates, where $n$ is the code length.
Furthermore, input-distribution-aware (IDA) decoding is applied to BE GPSCL decoders, which shows how an accurate bias is generated under a low-complexity decoder. 
Up to $5.4\times$ reduction in the computational complexity is achieved compared to SCL decoders with a list size of $16$, and negligible latency overhead is added to the decoding process.
The degraded decoding performance is at most $0.05\text{ dB}$ compared to BE GPSCL decoders without IDA decoding.
Lastly, we theoretically prove that the bias in the BE SCL decoder moves the received soft information toward valid polar codewords with a high likelihood, and explain the decoding performance gain. 
\end{abstract}

\begin{IEEEkeywords}
ablation study, polar codes, perturbation-enhanced decoding, successive cancellation list decoder
\end{IEEEkeywords}

\section{Introduction}
Emerging applications are introducing new challenges for future communication systems, particularly in terms of reducing energy consumption and implementation costs~\cite{andrew20246g,sambhwani20226g,fettweis20216g,zhang20236g,geiselhart20236g}. 
These constraints drive the need for efficient and scalable deployment strategies, especially in scenarios involving limited infrastructure~\cite{chen2020connected}.
To meet these demands, reduced-capability devices are required to support a broad range of use cases. 
The air interface remains a critical component of communication systems, with improvements proposed in~\cite{vetter20236g,viswanathan20206g}. 
Channel coding, a key part of the air interface~\cite{vetter20236g}, plays a significant role in ensuring reliable data transmission. 
Research is needed to develop decoding techniques that maintain a high reliability while operating under a reduced computational capability, with a focus on minimizing energy usage and implementation costs to address the requirements of emerging applications.

Polar codes are the first channel coding scheme with the proven capacity-achieving capability~\cite{Polar}.
Though polar codes can be decoded by the low complexity ($O\left(n\log\left(n\right)\right)$~\cite{Polar}) \gls{sc} decoder, the decoding performance is mediocre under short-to-medium code lengths as the capacity-achieving capability can only be achieved under the asymptotical case.
Polar codes (along with a cyclic redundancy check outer code) have been adopted in the 5G communication standard~\cite{embb,Bioglio20215gpolarttorial} and exhibit excellent decoding performance in the short-to-medium code length (up to 2K bits) using the \gls{scl} decoder~\cite{SCL}, and polar codes with a long code length is suggested to used in new applications~\cite{geiselhart20236g}. 

Polar codes exhibit improving performance as the code length increases and achieve capacity asymptotically in the code length~\cite{Polar}, but the \gls{scl} decoder~\cite{SCL} is still needed to achieve a good decoding performance under a code length of up to 8K for applications that require a high reliability.
Hence, there remains a need for research to address important challenges in the implementation of decoders for long polar codes, especially targeting the stringent requirements on energy consumption and the chip area that is related to the implementation cost.
In this work, we focus on reducing computational complexity (number of operations) and memory usage, metrics which are closely related to energy consumption and chip area~\cite{zhang20236g}~\cite{Alexios2015llrscl}.

For long polar codes, the area requirement of implementing the \gls{sc} decoder is the main challenge, and has been addressed in prior work by dividing the decoder into two phases~\cite{Pamuk2013twophasesSC}, using a semi-parallel architecture~\cite{Leroux2013semiparalleSC}, mixing different decoder architectures~\cite{Le2016threephasesSC}, rate-dependent quantization and chaining optimization~\cite{Raymond2014ScableSC}, or non-uniform quantization of internal log-likelihood ratio and compressing frozen bit memory~\cite{Le2017SCverylongpolar}.
The fast \gls{sc} decoder is implemented for decoding long polar codes~\cite{FSC}, and it uses fewer look-up tables and registers but more random access memory than the semi-parallel SC~\cite{Leroux2013semiparalleSC} in the field programmable gate array implementation.

Given the large area requirement of the SC decoder for long polar codes, the area requirement of the \gls{scl} for long polar codes will be even larger.
For example, extra memory, which needs a large area, is required to store information for the list decoding, and the required sorter increases the area needed in the hardware implementation~\cite{MetricSortSCL}.
For \gls{scl} decoders, most implementations are focused on code length $\leq 2^{11}$~\cite{liu2018length2048scl}.
An \gls{scl} decoder for length-$2^{12}$ polar codes is built for a storage system that uses hard decision values in~\cite{park2024longpolarstorage}.
An improved \gls{gpscl} decoder with different list sizes in different partitions is proposed for polar codes with a code length of $2^{13}$ in~\cite{Shen2020listdecoderlongpolar}.
Prior work~\cite{Shen2020listdecoderlongpolar} focuses on algorithmic improvement on latency reduction.
In this work, we are aiming at reducing the memory usage and the computational complexity, and improving the decoding performance of the SCL decoder for decoding polar codes with a length $>2^{11}$.

Recently, a \gls{pe} \gls{scl} decoder was proposed to improve the error correction performance of the \gls{scl} decoder~\cite{wang2023randomperlist,wang2024adaptive}, and it was proved that, asymptotically, the decoding performance can be improved~\cite{Liu2025analysisperturbationsc}.
When decoding long polar codes, the \gls{pe} \gls{scl} decoder with a list size of $L$ can achieve a similar error correction performance to the \gls{scl} decoder with a list size of $2L$, either using ten decoding attempts with random perturbation or using two decoding attempts with adaptive and fixed amount perturbation based on the previous \gls{scl} decoding result, which makes the adaptive \gls{pe} decoder a good candidate for practical implementation.
The perturbation decoding has also been extended to the \gls{sc} decoder~\cite{yang2025biassuccessive,yang2025improved,yang2026improved}, the \gls{sc} flip decoder~\cite{pillet2025successiveflip}, and the \gls{scl} flip decoder~\cite{yang2026PSCLF}, but their decoding performance does not reach the \gls{scl} decoder with a list size of $16$, unlike the \gls{pe} \gls{scl} decoder.
In this work, we investigate the key factor to the improvement in the \gls{pe} \gls{scl} decoder and design simplified and low-complexity enhancement-based \gls{scl} decoders.

Portions of this work have been accepted by the 2025 Asilomar Conference on Signals, Systems, and Computers, where the ablation study (i.e., removing parts of the algorithm to investigate how removed parts affect the decoding performance) on the PE SCL decoder is performed, and a simplified algorithmic design, the \gls{be} SCL decoder, is proposed~\cite{li2025bescl}.
The proposed \gls{be} \gls{scl} decoder can return similar error correction performance to the \gls{pe} \gls{scl} decoder when the number of decoding attempts is $2$.
A recent work~\cite{yang2025biassuccessive} analyzes the bias added to the \gls{sc} decoding, and it shows that added bias moves received symbols closer to the maximum likelihood estimations.
Building on top of the prior work~\cite{li2025bescl}, we target algorithmic improvements, empirical and theoretical analyses, and make the following contributions:

\begin{enumerate}
    \item We use a partitioned decoder to reduce the memory usage of the \gls{be} \gls{scl} decoder, which helps to enable the implementation of the \gls{scl} decoder for long polar codes.
    Compared to the \gls{scl} decoder with a list size of $16$, a $67\%$ reduction in the memory usage is achieved without degraded decoding performance.
    \item We use \gls{ida} decoding to adaptively determine the list size for the \gls{be} decoder, which reduces the average list size used by the decoding.
    When compared to the \gls{scl} decoder with a list size of $16$, the reduced average list size leads to a reduction of up to $5.4\times$ in the average computational complexity.
    The degradation in performance is at most $0.05\text{ dB}$ compared to the \gls{be} decoders without \gls{ida} decoding.
    \item The approximated \gls{pm} used by the fast \gls{scl} decoder~\cite{hashemi2017FastSCL} is analyzed in this work. We prove that the error due to the approximation can be calibrated without the computational-complexity increase.
    We prove that the bias in the \gls{be} \gls{scl} decoder moves the received soft information toward valid polar codewords with a high likelihood, and explain the decoding performance gain.
\end{enumerate}
While existing techniques (i.e., \gls{gpscl} decoders~\cite{Hashemi2018partitionlist} and the \gls{ida} decoding~\cite{IDASCL}) are applied in this work, the following new contributions are also made.
First, by using the \gls{gpscl} decoder, we found that an accurate bias can be generated under a reduced number of codewords from the list, which can reduce the overhead brought by the enhancement-based decoder from $\left(L-1\right)n$ XOR gates plus $n$ $L$-to-$\log_{2}\left(L\right)$ priority encoders to $n$ XOR gates, where $n$ is the code length.
Secondly, by using the \gls{ida} decoding, we can see that a low-complexity decoder can generate an accurate bias, and negligible latency overhead is required by this added \gls{ida} decoding.

Furthermore, the adaptive decoding idea has been applied in literature like~\cite{li2012adaptive,song2019efficient}.
Compared to prior works~\cite{li2012adaptive,song2019efficient}, our work has new contributions in the following way.
In~\cite{li2012adaptive,song2019efficient}, the decoding starts with a small list/stack size and then gradually increases the list/stack size if the decoding using the smaller list/stack size fails or certain criteria are met.
Though a low computational complexity is returned, the best possible decoding performance is capped by the maximum list/stack size allowed, and the implementation should have enough resources to support the decoding with the maximum list/stack size.
However, the \gls{be} decoding can improve the decoding performance under a reduced computational complexity, which is similar to~\cite{li2012adaptive,song2019efficient}, but only requires a smaller amount of resources in the implementation to achieve the target \gls{fer}, unlike~\cite{li2012adaptive,song2019efficient}.
Also, the adaptive mechanism in~\cite{song2019efficient} is integrated into the decoding process, which causes extra latency overhead, while our proposed systematic integration causes negligible latency overhead.

This work is structured as follows. 
Section~\ref{sec:preliminaries} provides the necessary background of polar codes and decoders used in this work. 
Section~\ref{sec:ablation} shows the ablation study of the adaptive \gls{pe} \gls{scl} decoder. 
Experimental results of the \gls{be} \gls{scl} decoder are shown in Section~\ref{sec:results}.
The BE decoder with the reduced memory usage is presented in Section~\ref{sec:redcuedmem}.
The BE decoder with the reduced computational complexity is presented in Section~\ref{sec:reducedcomp}.
Section~\ref{sec:ana_pm_bias} analyzes the \gls{pm} used by the \gls{scl} decoding and the bias used in the \gls{be} \gls{scl} decoder.
The conclusion is drawn in Section~\ref{sec:conclusion}.

\section{Preliminaries}
Matrices and vectors are denoted by bold upper-case letters ($\bm{M}$) and lower-case letters ($\bm{m}$).
The $i$th element of the vector $\bm{m}$ is $m_{i}$.
The Kronecker product, the matrix transpose, the sign function, the absolute value, and the hard decision $\mathbbm{1}\left(x<0\right)$ are denoted by $\otimes$, $^\top$, $\text{sign}(\cdot)$, $|\cdot|$, and $\text{HD}\left(\cdot\right)$.
\label{sec:preliminaries}
\subsection{Constructions of Polar Codes}
The generator matrix of length-$n$ polar codes is constructed by taking the $m$-th Kronecker power of a base matrix:
\begin{equation}
    \bm{G}=\bm{F}^{\otimes m}, \text{ }\bm{F}=
    \begin{bmatrix}
    1&0\\
    1&1
    \end{bmatrix}\text{,}
\end{equation}
where $m = \log_{2}(n)$, $n$ is the code length, and $\bm{F}$ is the base matrix.
To encode $k$ information bits, polar codes select $k$ rows in the generator matrix $\bm{G}$ to place information bits, selected rows correspond to synthetic channels with a high reliability, and we call these synthetic channels the information set $\mathcal{I}$.
The remaining $n-k$ rows are in the frozen set $\mathcal{F}$.
The code rate $R$ is defined as $R=k/n$.
For \gls{ca} polar codes, a length-$n_{\text{crc}}$ \gls{crc} code is appended to the message, so the number of encoded message bits is reduced to $k-n_{\text{crc}}$, and the effective code rate is $R=(k-n_{\text{crc}})/n$.
In this work, the reliability order of the synthetic channel is computed by the $\beta$-expansion method~\cite{BetaExpansionPolar}, 
with a length of up to $2^{13}=8192$, which is smaller than the largest code length ($n=8448$~\cite{embb}) supported by the control and the data channel of the 5G communication standard, and different effective code rates, and a \gls{crc} outer code is used.

\subsection{Successive Cancellation Decoder}

This work uses the convention that estimations of message bits are in the first stage of the factor graph for polar codes, and the estimation of the transmitted codeword is in the $(m+1)$-th stage of the factor graph for polar codes.
Message bits $1$ to $n$ are determined sequentially in the \gls{sc} decoder, and the message bit $u_{i}$ is estimated according to all received soft information and previously estimated message bits $u_{1},u_{2},...,u_{i-1}$.
The estimation of the message bits in $\mathcal{F}$ is always $0$.
The decoding process can be explicitly viewed as first recursively updating soft information from stage-$\left(m + 1\right)$ to stage-$1$~\cite{PolarDecodersReview}
\begin{equation}
\begin{split}
    &l_{i,j} = \\
    &\begin{cases}
        2\tanh^{-1}{\left(\tanh{\left(\frac{l_{i+1,j}}{2}\right)}\tanh{\left(\frac{l_{i+1,j + \xi}}{2}\right)}\right)}\text{,} \floor{\frac{j - 1}{\xi}}\text{mod}2 = 0\text{,}\\
        (1 - 2 \hat{c}_{i, j - \xi})l_{i + 1, j - \xi} + l_{i + 1, j}\text{,}\text{ otherwise,}\\
    \end{cases}
\end{split}
\end{equation}
where $i\in \left\{ 1, 2, ..., m + 1\right\}$ is the index of stages, $\xi=2^{i-1}$, $j\in \left\{ 1, 2, ..., n\right\}$ is the bit index, $l_{i, j}$ is the \gls{llr} of bit $j$ in stage $i$, and $\hat{c}_{i, j}$ is the hard decision of bit $j$ in stage $i$;
Then, hard decisions from stage-$1$ to stage-$m + 1$ are recursively updated~\cite{PolarDecodersReview}
\begin{equation}
    \hat{c}_{i + 1, j}=
    \begin{cases}
        \hat{c}_{i, j} \oplus \hat{c}_{i, j + 2^{i - 1}}\text{,} & \text{if }\floor{\frac{j - 1}{2^{i - 1}}}\text{mod }2 = 0\text{,}\\
        \hat{c}_{i, j}\text{,} & \text{otherwise}\text{.}\\
    \end{cases}
\end{equation}

For sub-trees with specific patterns of information bits and frozen bits, there exist fast decoding algorithms, so there is no need to traverse down the tree and generate the estimations bit-by-bit.
Commonly used special patterns are rate-$0$ nodes~\cite{SSC,FSC}, rate-$1$ nodes~\cite{SSC,FSC}, \gls{spc} nodes~\cite{FSC}, and repetition nodes~\cite{FSC}.

\subsection{Successive Cancellation List Decoder}
The \gls{scl} decoder retains all possible values (i.e., $0$ and $1$) for a message bit, a list of possible decoding paths is stored, and the final decision is made based on the \gls{pm} for each path. 
To constrain the exponential increase in the list size, the list size of \gls{scl} decoding is limited to a pre-defined number $L$. 
After estimating each bit $u_{i}$, the PM value is updated as follows~\cite{Alexios2015llrscl}
\begin{align*}
     \text{PM}_{i_{p}}=  \sum_{j=0}^{i} \ln\left(1+e^{-(1-2\hat{u}_{j_{p}}) l_{j_{p}}}\right)\text{,}
\end{align*}
where $p$ is the path index, $i$ is the index for message bits $u_{i}$, $l_{j_{p}}$ is the \gls{llr} value of \(u_{j}\) at path $p$, and \( \hat{u}_{j_{p}}\) is the estimate of bit $u_{j}$ at path $p$.
When decoding the \gls{ca} polar codes, the returned estimation is the one that passes the \gls{crc} and has the lowest \gls{pm}.
The \gls{pm} update for \gls{scl} with the special nodes can refer to~\cite{sarkis2016FastSCL,hashemi2017FastSCL}.

\subsection{Perturbation-Enhanced SCL Decoder}
The \gls{pe} \gls{scl} decoder takes received symbols \(y_{i}\) as the input. 
The perturbed version of \(y_{i}\) is denoted as \(y'_{i}\). 
The perturbation noise is denoted as \(\Xi\). 
The relationship between the received symbol, the perturbed version of the received symbol, and the perturbation noise is as follows
\begin{equation}
y'_{i}=y_{i}+\Xi \text{.}
\label{eqn:perturb_soft_information}
\end{equation}

The \gls{pe} \gls{scl} is built upon the \gls{scl} decoder for \gls{ca} polar codes.
At first, a \gls{scl} decoding attempt is performed. If the decoded codeword from the first \gls{scl} decoding attempt passes the CRC, the \gls{pe} \gls{scl} decoder returns the result of the first decoding attempt as the decoded codeword. 
When the \gls{crc} fails, the algorithm enters an iterative process. 
First, the perturbation value \(\Xi\) is generated, and the perturbed received value \(y'_{i}\) is decoded by the SCL decoder. 
If again the result from the current decoding attempt does not pass the \gls{crc}, the \gls{pe} \gls{scl} decoder applies a new perturbation to the received symbol and performs a new decoding attempt.
The \gls{pe} \gls{scl} decoder continues the loop until a pre-defined $T$-th attempt is achieved or a successful \gls{crc} is obtained.

There are two methods to generate the perturbation. 
One method randomly generates the perturbation, and the other method generates a perturbation based on the result of the previous decoding attempt (i.e., adaptive \gls{pe}).
In this work, we call the decoder with the random noise as the \gls{rpe} \gls{scl} decoder, and we call the decoder with the adaptive perturbation as the \gls{pe} \gls{scl} decoder.

The \gls{rpe} decoder uses random noise
\begin{equation}
\label{eq:perturb_value_norm}
  \Xi \sim \mathcal{N}(0,\,\sigma_{p}^{2}),
\end{equation}
as the perturbation, $\sigma_{p}^{2}$ is the variance of the random noise added to the received symbol, and a mean $0$ and variance $\sigma_{p}^{2}$ Gaussian distribution ($\mathcal{N}(0,\,\sigma_{p}^{2})$) is used to generate the random noise.
The adaptive \gls{pe} approach first checks all-agreed and all-disagreed bits.
All-agreed bits are those whose all decoding
paths agree on the bit decisions, including the received hard decision.
All-disagreed bits are those whose all decoding
paths agree on the bit decisions except for the received hard decision.
The remaining bits are called the partially-agreed bits.
The adaptive perturbation generates the perturbation as follows:
\begin{enumerate}
    \item In the $2$nd to $(T-1)$-th attempts, the algorithm applies a biased perturbation
    \begin{equation}
    \label{eqn:perturb_value_bias}
         \Xi=\lambda(-1)^{\hat{c}_i}  \text{,}
    \end{equation}
    where $\lambda=| \sigma_p/\sqrt{2} |$ represents the strength of the perturbation, only to the all-disagreed bits. 
    For all-agreed and partially-agreed bits, a random perturbation described in equation~\eqref{eq:perturb_value_norm} is applied.
    The symbol $\hat{c}_{i}$ denotes the all-agreed and all-disagreed code bits.
    \item In the $T$-th step, which is the last attempt, the algorithm applies a biased perturbation to both all-agreed bits and all-disagreed bits. 
    For partially agreed bits, the random perturbation described in equation~\eqref{eq:perturb_value_norm} is applied.
\end{enumerate}

\section{Methodology}
\label{sec:ablation}
\subsection{Ablation Study}
\label{sec:ablation_study}
In this work, we investigate what factors in the \gls{pe} \gls{scl} decoder contribute to the improvement in the error correction performance, so an ablation study is performed on the adaptive \gls{pe} \gls{scl} decoder.
The parameter $\sigma_{p}$ of the perturbation power  is tuned to a value such that the \gls{rpe} \gls{scl} decoder with a list size of $8$ and $10$ decoding attempts can reach the decoding performance of the \gls{scl} decoder with a list size of $16$.
The parameter $\sigma_{p}$ is in Table~\ref{tab:parameters}, and the prior work~\cite{yang2025biassuccessive} uses similar parameters.

The ablation study is performed using the \gls{awgn} channel, and, to simplify the study, the \gls{bpsk} modulation is used while the \gls{qpsk} modulation is used in~\cite{wang2024adaptive}.
List sizes of $16$ and $8$ are used in this ablation study.
The length of the \gls{crc} bits is $16$ with the polynomial $0\text{x}1021$.
Code lengths $1024$, $4096$, and $8192$ are used in the experiment.
Effective code rates $0.25$, $0.5$, and $0.75$ are selected in the experiment to explore the \gls{pe} decoder under different code rates.
The fast \gls{scl} decoder in~\cite{hashemi2017FastSCL} is used as the constituent decoder in the ablation study, and the number of decoding attempts is set to $T=2$.

From Fig.~\ref{fig:PE_SCL_FER} (a) to (c), the \gls{pe} \gls{scl} decoder with a list size of $8$ (\gls{scl}-$8$ \ref{fig1-4}) has $0.05\text{ dB}$ gap to the \gls{scl} decoder with a list size of $16$ (\gls{scl}-$16$ \ref{fig1-2}) at a \gls{fer} of $10^{-4}$ when $n=1024$.
From Fig.~\ref{fig:PE_SCL_FER} (d) to (i), the \gls{pe} \gls{scl} decoder with a list size of $8$ (\ref{fig1-4}) returns the same decoding performance to the \gls{scl} decoder with a list size of $16$ (\ref{fig1-2}) at a \gls{fer} of $10^{-4}$ when $n=4096$ and $8192$.
This observation is the same as~\cite{wang2024adaptive}, where a larger gain in the decoding performance is returned for codes with a longer code length.
The ordered statistics decoding in the post-processing stage uses a lower order than the ordered statistics decoding alone~\cite{fossorier2002iterative}, so we would like to see whether a low-complexity decoder can be used in the later decoding attempt.
We first test whether the \gls{scl}-$8$ decoder can be replaced by a \gls{sc} decoder in the second decoding attempt for the \gls{pe} \gls{scl} decoder with a list size of $8$.
From Fig.~\ref{fig:PE_SCL_FER} (e), the \gls{pe} decoder with a \gls{sc} decoder (\ref{fig1-5}) in the second decoding attempts (\gls{pe} \gls{scl}-$8$-\gls{sc}) returns a similar decoding performance to the \gls{scl}-$8$ (\ref{fig1-1}) when $n=4096$ and $R=0.5$, and the same trend exists in other $n$s and $R$s.
Hence, a \gls{scl}-$8$ decoder used in the second decoding attempt is necessary for the \gls{pe} decoder to reach the decoding performance of the \gls{scl}-$16$.

We then remove the random noise added to the partially-agreed code bits (\ref{fig1-6}).
From Fig.~\ref{fig:PE_SCL_FER} (e), removing the random noise (\ref{fig1-6}) does not degrade the decoding performance compared to \gls{pe} \gls{scl} decoder (\ref{fig1-4}) when $n=4096$ and $R=0.5$, and the same trend is observed from other $n$s and $R$s.
Hence, the improvement in the error correction performance is mainly from the bias added to the all-agreed and all-disagreed code bits.
It can be seen that the fixed amount of perturbation added to all-agreed and all-disagreed symbols biases the vector of the received symbol toward a valid polar codeword, and the subsequent decoding attempt has a higher chance of reaching a solution.

\subsection{Bias-Enhanced Decoder}
\begin{table}[t]
    \scriptsize
    \centering
    \caption{Parameters $\sigma_{p}$ of the perturbation power for \gls{scl} decoders.}
    \begin{tabular}{@{} c c c c c c @{}}
    \toprule
    & $n=1024$ && $n=4096$ &&  $n=8192$\\
    \midrule
         $R = 0.25$ & $0.25$ && $0.25$ && $0.25$\\
         $R = 0.50$ &  $0.10$ && $0.10$ && $0.10$\\
         $R =0 .75$ &  $0.10$ && $0.10$ && $0.10$\\
    \bottomrule
    \end{tabular}
    \label{tab:parameters}
\end{table}
From the results of Section~\ref{sec:ablation_study}, we can see that the bias derived from the previous decoding attempt and the usage of a \gls{scl} decoder in the subsequent decoding attempt contribute to most of the performance gain in the \gls{pe} \gls{scl} decoder.
Biases added to received symbols of all-agreed code bits further enhance the belief in the received symbols.
Biases added to the received symbol of all-disagreed code bits either correct errors in the received symbols or bias the received symbol toward a code bit of a valid polar codeword.
The benefit of soft bias for all-agreed and all-disagreed code bits is hard to quantify, but those added biases indeed create better soft information, which is closer to a valid polar codeword, for the subsequent decoding attempts, compared to the received symbol used in the initial decoding attempt.

In this work, we proposed to remove the random noise in the \gls{pe} \gls{scl} decoder while keeping the subsequent \gls{scl} decoding attempt with the same list size as the initial decoding attempt.
With all these proposed modifications, the hardware design for the BE decoder might be much simpler than the \gls{pe} decoder, as the unit for generating the random noise is removed while maintaining the same error correction performance.
To differentiate our new methodology to enhance the decoding performance from the \gls{pe} \gls{scl} decoder, we name our method the \gls{be} \gls{scl} decoder.
The \gls{fer} results of our proposed \gls{be} \gls{scl} decoder with a list size of eight correspond to the legends ``\gls{pe} \gls{scl}-8 without noise" in Fig.~\ref{fig:PE_SCL_FER}.
The pseudo-code is in Algorithm~\ref{alg:BE_SCL}.
\begin{algorithm}[t]
    \DontPrintSemicolon
    \SetAlgoVlined
    \footnotesize
    \caption{\label{alg:BE_SCL} \gls{be} \gls{scl} decoding}
    \SetKwFunction{scl}{SCL}
    \SetKwFunction{crc}{CRC}
    \KwIn{$\bm{y}$, $\sigma_{p}$, $T$, $L$}
    \KwOut{$\hat{\bm{c}}$}
    $\hat{\bm{c}}\leftarrow\scl\left(\bm{y}, L\right)$\;
    \If{$\crc\left(\hat{\bm{c}}\bm{G}^{\top}\right)=\text{False}$}
    {
        \For{$i=2:T$}
        {
          $\left\{\Xi\right\}\leftarrow\left\{\eqref{eqn:perturb_value_bias}|\forall\text{ }\text{all-disagreed bit positions.}\right\}$\;
          \If{$i=T$}
          {
              $\left\{\Xi\right\}\leftarrow\left\{\Xi\right\}\bigcup\left\{\eqref{eqn:perturb_value_bias}|\forall\text{ }\text{all-agreed bit positions.}\right\}$\;
          }
          $\hat{\bm{c}}\leftarrow\scl\left(\bm{y}^{'}\leftarrow\eqref{eqn:perturb_soft_information}, L\right)$\;
          \If{$\crc\left(\hat{\bm{c}}\bm{G}^{\top}\right)=\text{True}$}
          {
              $\text{break}$\;
          }
        }
        
    }
    \KwRet{$\hat{\bm{c}}$}
\end{algorithm}

\begin{figure*}[t]
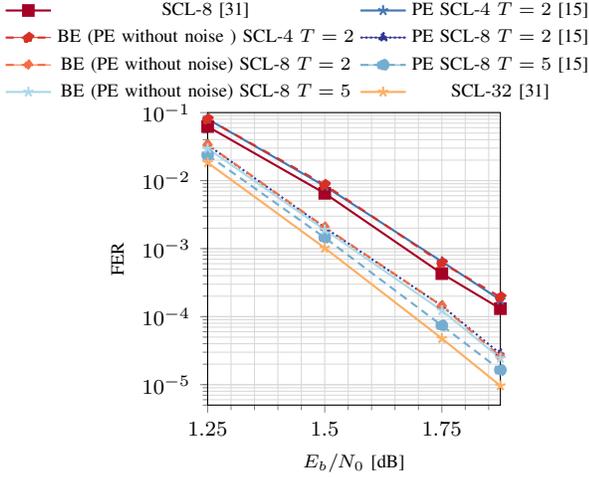

\centering


\caption{FERs of decoding polar codes using the \gls{pe} SCL decoder with different parameters when decoding the $n=4096$ and $R=0.5$ polar code.}
\label{fig:FER_diff_parameters}
\end{figure}
\section{Further Discussions}
\label{sec:results}
\textit{Different Numbers of Decoding Attempts:} We also investigate the effect of setting the number of decoding attempts $T>2$, and conduct the experiments on $n=4096$ and $R = 0.5$ polar codes.
As we can see from Fig.~\ref{fig:FER_diff_parameters}, with a list size of $8$, the \gls{be} \gls{scl} decoder has negligible gain in the error correction performance when increasing the number of decoding attempts from $2$ to $5$ while a gain of $0.1$ dB in the error correction performance is observed from the \gls{pe} \gls{scl} decoder when increasing the number of decoding attempts to $T=5$.
Hence, we can conclude that the random noise is the key to exploring new results for \gls{pe} decoders.
The \gls{fer} of \gls{pe} \gls{scl} decoder with $T=5$ is lower bounded by the \gls{scl} decoder with a list size of $32$, which means the gain in error correction performance is less significant after $T=2$.
At a target \gls{fer} of $10^{-3}$, by increasing $T$ from $1$ to $2$, the \gls{pe} \gls{scl} decoder improves the \gls{fer} by $0.1\text{ dB}$, while the gain of increasing $T$ from $2$ to $5$ is less than $0.05\text{ dB}$ in the \gls{fer}.
Given this marginal gain, we can conclude that most of the hard-to-decode frames are fixed by the bias (not by improving the \gls{scl} decoding) to improve the decoding performance when $T=2$, small improved decoding performance is seen for $T>2$, and the hard-to-decode frames become increasingly rare as the $E_{b}/N_{0}$ increases.
We conclude that the gain in the decoding performance is marginal when $T>2$.
Hence, we will focus on the case of $T=2$.

\textit{Different List Sizes:} We test the \gls{be} \gls{scl} decoder on polar codes with $n=4096$ and $R=0.5$, and a small list size of $4$ is used.
From Fig~\ref{fig:FER_diff_parameters}, we can see that the decoding performance of the \gls{be} \gls{scl} decoder is consistent with the decoding performance of the \gls{pe} \gls{scl} decoder when a list size of $4$ is used.
For a list size of $2$, the decoding performance of the \gls{be} \gls{scl} decoder is consistent with the decoding performance of the \gls{pe} \gls{scl} decoder, but these enhancement-based decoders have a $0.1$ dB loss at a  \gls{fer} of $10^{-2}$ compared to the \gls{scl} decoder with a list size of $4$.

\textit{Complexity Analysis:} For the decoding attempt $T=2$, both the \gls{pe} and \gls{be} \gls{scl} decoder invoke the second decoding attempt when the initial \gls{scl} decoder fails to find a result that can pass the \gls{crc}.
The chance of receiving symbols that will result in a decoding failure in the \gls{scl} decoder will be similar, given the similar simulation environment for both the \gls{pe} and the \gls{be} \gls{scl} decoders, so the average number of decoding attempts of the \gls{be} \gls{scl} decoder will be equal to the \gls{pe} \gls{scl} decoder when $T=2$.
The average number of decoding attempts (Avg. $T$) is shown in Fig.~\ref{fig:PE_SCL_Avg_T}, and the results confirm our conjecture.
Also, from Fig.~\ref{fig:PE_SCL_Avg_T}, the number of decoding attempts is close to one for codes under the whole $E_{b}/N_{0}$ range.
This reduced number of decoding attempts implies that, in a fixed-latency implementation for the \gls{be} \gls{scl} decoder, the decoder can be shut down for most of the time, so the second decoding attempt has minimal contribution to the energy consumption.

\begin{figure*}[ht]
\centering

}
\caption{FERs of decoding polar codes using the BE GPSCL decoders and the SCL decoder.}
\label{fig:partition_BE_SCL}
\end{figure*}

\subsection{BE GPSCL Decoders for Long Polar Codes}

The \gls{scl} decoder has a large memory usage of $O\left(Ln\right)$~\cite{SCL} where $n-1$ memory blocks are required to store the intermediate value in each path of the \gls{scl} decoder, and there are $L$ paths in the \gls{scl} decoder.
The large area requirement of the \gls{scl} decoder is mostly dominated by memory blocks used in the implementation~\cite{Alexios2015llrscl}.
Hence, the reduced list size means \gls{pe} and \gls{be} \gls{scl} decoders have area-efficient implementations for polar codes compared to the conventional \gls{scl} decoder.

In this work, we target the usage of the \gls{be} \gls{scl} decoder for long polar codes where a large memory usage is required due to the large code length $n$.
To reduce the memory usage of the \gls{be} \gls{scl} decoder, we propose to use the \gls{gpscl} decoder~\cite{Hashemi2018partitionlist} as the component decoder for the \gls{be} \gls{scl} decoder in this work.
The partitioned \gls{scl} decoder performs list decoding only on the lower level of the decoding tree that has a shorter code length $n$, while the \gls{sc} decoding is performed on the higher level of the decoding tree~\cite{Hashemi2016partitionlist,Hashemi2017partitionlist}.
Hence, only one path is returned on the higher level of the decoding tree, and the memory usage is reduced compared to returning $L$ paths on the higher level of the decoding tree.
In the partitioned decoder, a parameter for controlling the decoding performance and the memory trade-off is the number of paths ($S$) returned in the higher level~\cite{Hashemi2018partitionlist}, and the partitioned decoder that uses this parameter is called the \gls{gpscl} decoder.
In this work, a \gls{gpscl} decoder with a partition ($P$) of two and $S=2$ is used.
Our target decoding performance can be achieved when $S=2$, and we choose to use $S=2$ in our work because the smaller the $S$, the larger the reduction in the memory complexity, hence the larger area reduction~\cite{Hashemi2018partitionlist}.

The \gls{be} \gls{gpscl} works as the following:
\begin{itemize}
    \item The \gls{gpscl} decoder performs the \gls{scl} decoding on the shorter polar sub-codes.
    In this work, we use the fast \gls{scl} decoder~\cite{hashemi2017FastSCL}.
    \item For \gls{gpscl} decoders, candidate codewords, which pass the \gls{crc}, are returned. 
    If the number of candidate codewords is fewer than $S$, codewords, which fail the \gls{crc}, with a relatively smaller \gls{pm} are returned. 
    \item The \gls{crc} of the last partition is used to activate the bias-enhancement instead of checking \gls{crc} for all partitions, which removes the need to save the \gls{crc} results of all previous partitions.
    If this \gls{crc} passes, the \gls{be} \gls{gpscl} decoder considers that a correct decoded codeword is generated, and the decoding stops.
    If not, the bias enhancement for the next decoding attempt is generated based on the $S$ returned candidate codewords, unlike the $L$ candidate codewords used by the \gls{be} \gls{scl} decoder.
\end{itemize}
Fig~\ref{fig:block_diagram_BE_GPSCL} illustrates our \gls{be} \gls{gpscl} decoder.
Experiments are only performed for polar codes with a length of $4096$ and $8192$ because we are interested in the polar codes with a code length that is larger than the code length used in the 5G communication standard~\cite{embb} (i.e., $> 1024$), and is smaller or equal to the $8$K code lengths ($n=8448$) support by the control and the data channel used by the current 5G standard~\cite{embb}.

Fig.~\ref{fig:partition_BE_SCL} shows the simulation results of the \gls{be} \gls{gpscl} decoder.
The parameters for the perturbation power of the \gls{be} \gls{gpscl} decoder are the same as the parameters shown in Table~\ref{tab:parameters}.
As suggested by~\cite{Hashemi2018partitionlist}, we use \gls{crc} codes with different lengths for different partitions,
To maintain the same effective code rate, a \gls{crc} code with a length of $6$ is used for the first partition, and a \gls{crc} code with a length of $10$ is used for the second partition.
For the length-$6$ \gls{crc} code, the \gls{crc} polynomial is $0\text{x} 21$, while, for the length-$10$ \gls{crc} code, the \gls{crc} polynomial is $0\text{x} 233$.
From Fig.~\ref{fig:partition_BE_SCL} (b), \gls{be} \gls{gpscl}, with a list size of $8$ and $S=2$, can return the same error correction performance as the performance of a \gls{scl} decoder with a list size of $16$ when $n=4096$ and $R=0.5$.
The same trend exists for all other code lengths and rates.
For \gls{be} \gls{gpscl} decoder, with a list size of $8$ and $S=1$, it cannot achieve the error correction performance of the \gls{scl} decoder with a list size of $16$ when $n=4096$ and $R=0.50$.
This degradation in the error correction performance can be observed from all other code rates and lengths.

Also, from Fig.~\ref{fig:partition_BE_SCL} (b), the \gls{gpscl} decoder with $S=2$ and a list size of $8$ has a similar error correction performance to the \gls{scl} with a list size of $8$ when $n=4096$ and $R=0.50$, while the \gls{be} \gls{gpscl} decoder with $S=2$ and a list size of $8$ has a similar error correction performance to the \gls{scl} with a list size of $16$.
The bias used by the \gls{be} decoder is generated according to the two decoded codewords in the list.
Hence, we can conclude that only two candidate codewords are enough to generate a biased enhancement to improve the error correction performance.
The number of codewords is not the key factor to generate accurate biased enhancement, and the accuracy of the codeword estimation is the key factor to generate an accurate biased enhancement.
Moreover, compared to the \gls{be} \gls{scl} decoder, the determination of all-agree and all-disagree positions takes fewer operations as the \gls{gpscl} decoder has fewer codewords in the list.

When $T=2$, the bias is applied to all-agreed and all-disagreed bits in the \gls{be} \gls{scl} decoder.
The XOR gate and the priority encoder can be used to check all-agreed and all-disagreed bits for the \gls{be} \gls{scl} decoder.
Bits in the first codeword in the list are XORed with the corresponding bits in all other codewords in the list, $L-1$ XOR gates are required, the result is a length-$\left(L-1\right)$ binary vector, and a bit $1$ is appended to the index $1$ of the length-$\left(L-1\right)$ binary vector.
The $L$-to-$\log_{2}\left(L\right)$ priority encoder processes the length-$L$ binary output from the XOR gate array.
An output that is equal to $0$ from the priority encoder indicates an all-agreed or an all-disagreed bit because this input implies all bits in the list agree.
In total, $\left(L-1\right)n$ XOR gates and $n$ priority encoders are used to determine the all-agreed and all-disagreed bits for the \gls{be} \gls{scl} decoder.

When $T=2$, the bias is also applied to all-agreed and all-disagreed bits in the \gls{be} \gls{gpscl} decoder.
When the \gls{gpscl} decoder with $S=2$ is used, only one XOR gate is required to determine the all-agreed and all-disagreed bits.
An XOR gate is used to compare the bits in the first returned path and the second returned path, and an output $0$ means two bits agree, which implies an all-agreed or an all-disagreed bit, and an output $1$ returns otherwise.
In total, $n$ XOR gates are used to determine the all-agreed and all-disagreed bits for the \gls{be} \gls{gpscl} decoder, which is a smaller overhead than the $\left(L-1\right)n$ XOR gates plus $n$ $L$-to-$\log_{2}\left(L\right)$ priority encoders.

\subsection{Quantized Model of the \gls{be} \gls{gpscl} Decoder}
Fig.~\ref{fig:partition_BE_SCL_quantized} shows the decoding performance of the quantized model.
Table~\ref{tab:bit-width} shows the bit-widths used for the quantized model, where $q_{i}$ is the bit-width for the integer part and $q_{f}$ is the bit-width for the fractional part.
We compare our \gls{be} \gls{gpscl} decoder with the fast \gls{scl} decoder~\cite{hashemi2017FastSCL}.
For the received \gls{llr} and the internal \gls{llr}, a sign bit is included in the integer part, and the sign and magnitude representation is used.
The \gls{pm} is unsigned, and all bits are used to represent the magnitude.
Fig.~\ref{fig:partition_BE_SCL_quantized} shows the \gls{fer} of the quantized \gls{be} \gls{gpscl} decoder with the fast \gls{scl} component decoder.

The memory usage, which is measured by the number of bits, of the \gls{scl} and the \gls{gpscl} can be  computed as follows~\cite{Hashemi2018partitionlist}:
\begin{align*}
    M_{\text{SCL}} = &n * Q_{\text{LLR}} + \left(n-1\right)*L*Q_{\alpha} + L*Q_{\text{PM}}\\
    &+ \left(2n-1\right)*L\text{;}
\end{align*}
\begin{align*}
    M_{\text{GPSCL}} = &n * Q_{\text{LLR}} + \left(S * \sum_{i=1}^{\log_{2}(P)} \frac{n}{2^{i}} + L \left(\frac{n}{P} - 1\right)\right)Q_{\alpha}\\
    &+ L*Q_{\text{PM}} + S * \sum_{i=1}^{\log_{2}(P)} \frac{n}{2^{i}} + L \left(\frac{2n}{P} - 1\right)\text{.}
\end{align*}
The bit-width for the channel \gls{llr}, the internal \gls{llr}, and the \gls{pm} are denoted as $Q_{\text{LLR}}$, $Q_{\alpha}$, and $Q_{\text{PM}}$ respectively.
The bit-width is shown in Table~\ref{tab:bit-width}.
The component decoder for the \gls{be} \gls{scl} decoder is the \gls{scl} decoder, and the memory usage of the \gls{be} \gls{scl} decoder is equal to the memory usage of the \gls{scl} decoder.
The component decoder for the \gls{be} \gls{gpscl} decoder is the \gls{gpscl} decoder, and the memory usage of the \gls{be} \gls{gpscl} decoder is equal to the memory usage of the \gls{gpscl} decoder.

From Fig.~\ref{fig:partition_BE_SCL_quantized} (b), the quantized model of the \gls{be} \gls{gpscl} decoder has a loss of less than $0.05\text{ dB}$ at a \gls{fer} of $2\times10^{-4}$ when decoding polar codes with $n=8192$ and $R=0.25$.
For all other code lengths and rates, the loss due to the quantization is also less than $0.05\text{ dB}$.
According to Table~\ref{tab:bit-width}, the \gls{be} \gls{gpscl} decoder can be quantized using the same bit-width as the \gls{scl} decoder.
The memory usage of the proposed techniques is shown in Table~\ref{tab:mem_com}.
Compared to the memory usage of the \gls{scl} decoder with a list size of $16$, the \gls{be} \gls{scl} decoder reduces the memory usage by $48\%$.
Compared to the memory usage of the \gls{scl} decoder with a list size of $16$, the memory reduction of the \gls{be} \gls{gpscl} decoder is $67\%$.

\begin{table}[t]
    \scriptsize
    \centering
    \caption{Quantization bit-widths for Decoders.}
    \begin{tabular}{@{} c c cc c cc c cc@{}}
    \toprule
    &\multicolumn{2}{c}{Received \gls{llr}}&\phantom{a}&\multicolumn{2}{c}{Internal \gls{llr}}&\phantom{a}&\multicolumn{2}{c}{\gls{pm}}\\
    \cmidrule{2-3} \cmidrule{5-6} \cmidrule{8-9}
    &$q_{i}$ & $q_{f}$ && $q_{i}$ & $q_{f}$ && $q_{i}$ & $q_{f}$ \\
    \midrule
         $n=4096$, $R=0.25$ & $4$ & $2$ && $6$ & $2$ && $7$ & $2$\\
         $n=8192$, $R=0.25$ & $4$ & $2$ && $6$ & $2$ && $8$ & $2$\\
         $R=0.50$ & $4$ & $2$ && $6$ & $2$ && $7$ & $2$\\
         $R=0.75$ & $4$ & $2$ && $6$ & $2$ && $7$ & $2$\\
    \bottomrule
    \end{tabular}
    \label{tab:bit-width}
\end{table}
\begin{table}
    \centering
    \caption{memory usage (Mem.) of different decoders.}
    \begin{tabular}{@{}c c c c c c c c@{}}
    \toprule
    &\multicolumn{1}{c}{\gls{scl}-$16$}&\phantom{a}&\multicolumn{1}{c}{\gls{be} \gls{scl}-$8$}&\phantom{a}&\multicolumn{1}{c}{\gls{be} \gls{gpscl}-$8$}\\
    &Mem. [bits] && Mem. [bits] && Mem. [bits]\\
    \midrule
    $n=4096$ &  $6.80\times10^{5}$ && $3.52\times10^{5}$&& $2.25\times10^{5}$\\
    $n=8192$ & $1.36\times10^{6}$ && $7.05\times10^{5}$&& $4.51\times10^{5}$ \\ 
    \bottomrule
    \end{tabular}
    \label{tab:mem_com}
\end{table}

\begin{figure*}[t]
\centering
\begin{tikzpicture}
  \begin{customlegend}[legend columns=3,legend style={at={(0.0,-2.0)},anchor=south west, align=left,draw=none},
  legend entries={
  \scriptsize{SCL-$8$~\cite{hashemi2017FastSCL}}, \scriptsize{SCL-$8$ Quantized~\cite{hashemi2017FastSCL}}, \scriptsize{BE GPSCL-$8$ $S=2$ $T=2$}, \scriptsize{BE GPSCL-$8$ $S=2$ $T=2$ Quantized}, \scriptsize{IDA BE GPSCL-$8$ $S=2$ $T=2$ Quantized}
  }
  ]
  \addlegendimage{draw=colorblindfree11_1, mark=square*, fill=colorblindfree11_1}
  
  \addlegendimage{draw=colorblindfree11_9,mark=pentagon*,fill=colorblindfree11_9, densely dashed}
  \addlegendimage{draw=colorblindfree11_2,mark=star,fill=colorblindfree11_2}
  
  \addlegendimage{draw=colorblindfree11_10,mark=*,fill=colorblindfree11_10, densely dashed}
  \addlegendimage{draw=colorblindfree11_3,mark=diamond*,fill=colorblindfree11_3, dashed}
  




  \end{customlegend}
\end{tikzpicture}
\vspace{-1.6em}

\captionsetup[subfigure]{font=scriptsize,labelfont=scriptsize}
\subfloat[\scriptsize{$n=4096$}]{
    \centering
    \begin{tikzpicture}[spy using outlines={circle,black,magnification=2, connect spies}]
      \begin{axis}[
        footnotesize, width=\columnwidth, height=0.44\columnwidth,     
        xmin=0.5, xmax=3.25, xtick={0.50,1.0,...,3.25},minor x tick num=4,
        ymin=1e-5,  ymax=1e-1,
        xlabel=\scriptsize{$E_b/N_0 \text{ [dB]}$}, ylabel=\scriptsize{FER},  ylabel near ticks,
        grid=both, grid style={gray!30}, ymode=log, log basis y={10},
        tick align=outside, tickpos=left, legend columns=1,legend style={
        at={(0.50,1.50)},anchor=center},
        cycle list = {
        {draw=colorblindfree11_1,mark=square*, mark options = {fill=colorblindfree11_1}},
        {draw=colorblindfree11_9,mark=pentagon*,mark options ={fill=colorblindfree11_9}, densely dashed},
        {draw=colorblindfree11_2,mark=star,mark options = {fill=colorblindfree11_1}},
        {draw=colorblindfree11_10, mark = *, mark options = {fill= colorblindfree11_10}, densely dashed},
        {draw=colorblindfree11_3, mark = diamond*, mark options = {fill= colorblindfree11_3}, dashed}
        }
        ]
        \addplot table [x=Eb/N0, y=SCL8DisCRC] {data/n_4096_R_0_25_CRC_16_dist_CRC.txt};
        \addplot table [x=Eb/N0, y=SCL8DisCRCQuant] {data/n_4096_R_0_25_CRC_16_dist_CRC.txt};
        \addplot table [x=Eb/N0, y=BEGPSCL8L2DisCRC] {data/n_4096_R_0_25_CRC_16_dist_CRC.txt};
        \addplot table [x=Eb/N0, y=BEGPSCL8L2DisCRCQuant] {data/n_4096_R_0_25_CRC_16_dist_CRC.txt};
        \addplot table [x=Eb/N0, y=IDABEGPSCL8L2Quant] {data/n_4096_R_0_25_CRC_16_dist_CRC.txt};
        \node [pin=0:\tiny{$R=0.25$}] at (0.53,5e-2) {};
        \draw [black] (1.0,1e-3) ellipse [rotate=0, x radius=0.08, y radius=0.40,];
        \addplot table [x=Eb/N0, y=SCL8DisCRC] {data/n_4096_R_0_5_CRC_16_dist_CRC.txt};
        \addplot table [x=Eb/N0, y=SCL8DisCRCQuant] {data/n_4096_R_0_5_CRC_16_dist_CRC.txt};
        \addplot table [x=Eb/N0, y=BEGPSCL8L2DisCRC] {data/n_4096_R_0_5_CRC_16_dist_CRC.txt};
        \addplot table [x=Eb/N0, y=BEGPSCL8L2DisCRCQuant] {data/n_4096_R_0_5_CRC_16_dist_CRC.txt};
        \addplot table [x=Eb/N0, y=IDABEGPSCL8L2Quant] {data/n_4096_R_0_5_CRC_16_dist_CRC.txt};
        \node [pin=0:\tiny{$R=0.50$}] at (1.28,5e-2) {};
        \draw [black] (1.60,1e-3) ellipse [rotate=0, x radius=0.08, y radius=0.40,];
        \addplot table [x=Eb/N0, y=SCL8DisCRC] {data/n_4096_R_0_75_CRC_16_dist_CRC.txt};
        \addplot table [x=Eb/N0, y=SCL8DisCRCQuant] {data/n_4096_R_0_75_CRC_16_dist_CRC.txt};
        \addplot table [x=Eb/N0, y=BEGPSCL8L2DisCRC] {data/n_4096_R_0_75_CRC_16_dist_CRC.txt};
        \addplot table [x=Eb/N0, y=BEGPSCL8L2DisCRCQuant] {data/n_4096_R_0_75_CRC_16_dist_CRC.txt};
        \addplot table [x=Eb/N0, y=IDABEGPSCL8L2Quant] {data/n_4096_R_0_75_CRC_16_dist_CRC.txt};
        \node [pin=0:\tiny{$R=0.75$}] at (2.53,5e-2) {};
        \draw [black] (2.80,1e-3) ellipse [rotate=0, x radius=0.08, y radius=0.40,];
      \end{axis}
    \end{tikzpicture}
}
\subfloat[\scriptsize{$n=8192$}]{
    \centering
    \begin{tikzpicture}[spy using outlines={circle,black,magnification=2, connect spies}]
      \begin{axis}[
        footnotesize, width=\columnwidth, height=0.44\columnwidth,     
        xmin=0.5, xmax=3.25, xtick={0.5,1.0,...,3.25},minor x tick num=4,
        ymin=1e-5,  ymax=5e-2,
        xlabel=\scriptsize{$E_b/N_0 \text{ [dB]}$}, 
        ylabel=\scriptsize{FER},  
        ylabel near ticks,
        grid=both, grid style={gray!30}, ymode=log, log basis y={10},
        tick align=outside, tickpos=left, legend columns=1,legend style={
        at={(0.50,1.50)},anchor=center},
        cycle list = {
        {draw=colorblindfree11_1,mark=square*, mark options = {fill=colorblindfree11_1}},
        {draw=colorblindfree11_9,mark=pentagon*,mark options ={fill=colorblindfree11_9}, densely dashed},
        {draw=colorblindfree11_2,mark=star,mark options = {fill=colorblindfree11_2}},
        {draw=colorblindfree11_10, mark = *, mark options = {fill= colorblindfree11_10}, densely dashed},
        {draw=colorblindfree11_3, mark = diamond*, mark options = {fill= colorblindfree11_3}, dashed}
        }
        ]
        \addplot table [skip coords between index={3}{4},x=Eb/N0, y=SCL8DisCRC] {data/n_8192_R_0_25_CRC_16_dist_CRC.txt};
        \addplot table [skip coords between index={3}{4},x=Eb/N0, y=SCL8DisCRCQuant] {data/n_8192_R_0_25_CRC_16_dist_CRC.txt};
        \addplot table [skip coords between index={3}{4},x=Eb/N0, y=BEGPSCL8L2DisCRC] {data/n_8192_R_0_25_CRC_16_dist_CRC.txt};
        \addplot table [skip coords between index={3}{4},x=Eb/N0, y=BEGPSCL8L2DisCRCQuant] {data/n_8192_R_0_25_CRC_16_dist_CRC.txt};
        \addplot table [skip coords between index={3}{4},x=Eb/N0, y=IDABEGPSCL8L2Quant] {data/n_8192_R_0_25_CRC_16_dist_CRC.txt};
        \node [pin=0:\tiny{$R=0.25$}] at (0.52,3e-2) {};
        \draw [black] (0.8,1e-3) ellipse [rotate=0, x radius=0.08, y radius=0.40,];
        \addplot table [skip coords between index={3}{4},x=Eb/N0, y=SCL8DisCRC] {data/n_8192_R_0_5_CRC_16_dist_CRC.txt};
        \addplot table [skip coords between index={3}{4},x=Eb/N0, y=SCL8DisCRCQuant] {data/n_8192_R_0_5_CRC_16_dist_CRC.txt};
        \addplot table [skip coords between index={3}{4},x=Eb/N0, y=BEGPSCL8L2DisCRC] {data/n_8192_R_0_5_CRC_16_dist_CRC.txt};
        \addplot table [skip coords between index={3}{4},x=Eb/N0, y=BEGPSCL8L2DisCRCQuant] {data/n_8192_R_0_5_CRC_16_dist_CRC.txt};
        \addplot table [skip coords between index={3}{4},x=Eb/N0, y=IDABEGPSCL8L2Quant] {data/n_8192_R_0_5_CRC_16_dist_CRC.txt};
        \node [pin=0:\tiny{$R=0.50$}] at (1.30,3e-2) {};
        \draw [black] (1.5,1e-3) ellipse [rotate=0, x radius=0.08, y radius=0.40,];
        \addplot table [skip coords between index={3}{4},x=Eb/N0, y=SCL8DisCRC] {data/n_8192_R_0_75_CRC_16_dist_CRC.txt};
        \addplot table [skip coords between index={3}{4},x=Eb/N0, y=SCL8DisCRCQuant] {data/n_8192_R_0_75_CRC_16_dist_CRC.txt};
        \addplot table [skip coords between index={3}{4},x=Eb/N0, y=BEGPSCL8L2DisCRC] {data/n_8192_R_0_75_CRC_16_dist_CRC.txt};
        \addplot table [skip coords between index={3}{4},x=Eb/N0, y=BEGPSCL8L2DisCRCQuant] {data/n_8192_R_0_75_CRC_16_dist_CRC.txt};
        \addplot table [skip coords between index={3}{4},x=Eb/N0, y=IDABEGPSCL8L2Quant] {data/n_8192_R_0_75_CRC_16_dist_CRC.txt};
        \node [pin=0:\tiny{$R=0.75$}] at (2.50,3e-2) {};
        \draw [black] (2.7,1e-3) ellipse [rotate=0, x radius=0.08, y radius=0.40,];
      \end{axis}
    \end{tikzpicture}
}
\caption{FERs of decoding polar codes using the quantized BE GPSCL decoders and the SCL decoder.}
\label{fig:partition_BE_SCL_quantized}
\end{figure*}

\section{Algorithm with Reduced Computational Complexity}
\label{sec:reducedcomp}
It is shown in~\cite{johannsen2025dpaSCL} that, by adaptively deactivating the decoding path in the \gls{scl} decoder, parts of hardware components are disabled via clock gating, and the power consumption of the \gls{scl} decoder is reduced because the energy consumption of the \gls{scl} decoder is dominated by the switching power of the sequential circuits~\cite{tao2021powerbreakdownscl}.
In this section, we propose to adaptively select the list size for the decoder.
It is unknown whether accurate bias can be generated under the adaptive list size.
We find that accurate bias enhancements can still be generated after adopting \gls{ida} decoding to adaptively select the list size in the first decoding attempt, while having little degraded decoding performance.

We propose to adaptively choose the list size based on the received soft information for the first decoding attempt.
Adaptive selection of list sizes preserves the error correction performance of the \gls{scl} decoder used in the first decoding attempt while reducing the computational complexity.
Compared to the \gls{be} \gls{gpscl} decoder with a list size of $8$, Fig.~\ref{fig:FER_diff_IDA} shows that using a \gls{be} decoder with a fixed list size of $4$ in the first decoding attempt causes $0.05\text{ dB}$ degradation in the error correction performance at a \gls{fer} of $10^{-4}$ compared to using a list size of $8$ in the first decoding attempt.

The \gls{ida} decoding adaptively determines the list size as follows~\cite{IDASCL}.
All received \gls{llr}s are compared with the \gls{llr} threshold $\gamma$, and \gls{llr}s whose magnitude is smaller than or equal to $\gamma$ are considered as unreliable.
If the number of unreliable \gls{llr} is smaller than the threshold $\varphi$, a small list size is used.
Otherwise, a large list size is used.
As the channel condition is getting better, the average list size used by the decoding is reduced, as the small list size is used more often than the large list size.
In this work, the \gls{ida} decoding~\cite{IDASCL} is applied on the first decoding attempt of the \gls{be} \gls{gpscl} decoder.

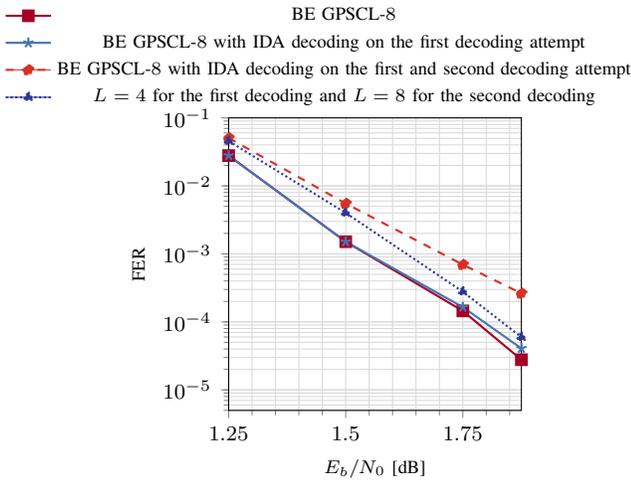
\begin{figure}
\centering
\begin{tikzpicture}
  \begin{customlegend}[legend columns=1,legend style={at={(0.0,-2.0)},anchor=south west, align=left,draw=none},
  legend entries={
  \scriptsize{BE GPSCL-$8$}, \scriptsize{BE GPSCL-$8$ with IDA decoding on the first decoding attempt}, \scriptsize{BE GPSCL-$8$ with IDA decoding on the first and second decoding attempt}, \scriptsize{$L=4$ for the first decoding and $L=8$ for the second decoding}
  }
  ]
  \addlegendimage{draw=colorblindfree11_1, mark=square*, fill=colorblindfree11_1}
  \addlegendimage{draw=colorblindfree11_10,mark=star,fill=colorblindfree11_10}
  
  \addlegendimage{draw=colorblindfree11_2,mark=pentagon*,fill=colorblindfree11_2, dashed}
  \addlegendimage{draw=colorblindfree11_11,mark=triangle*,fill=colorblindfree11_11, densely dotted}
  
  \addlegendimage{draw=colorblindfree11_3,mark=diamond*,fill=colorblindfree11_3, dashed}
  \addlegendimage{draw=colorblindfree11_9,mark=*,fill=colorblindfree11_9, densely dashed}

  \addlegendimage{draw=colorblindfree11_8,mark=star,fill=colorblindfree11_8}
  \addlegendimage{draw=colorblindfree11_4,mark=star,fill=colorblindfree11_4}

  \end{customlegend}
\end{tikzpicture}
\vspace{-0.6em}

    \begin{tikzpicture}[spy using outlines={circle,black,magnification=2, connect spies}]
      \begin{axis}[
        footnotesize, width=.618\columnwidth, height=.5\columnwidth,     
        xmin=1.25, xmax=1.875, xtick={1.25,1.5,...,2},minor x tick num=4,
        ymin=5e-6,  ymax=1e-1,
        xlabel=\scriptsize{$E_b/N_0 \text{ [dB]}$}, 
        ylabel=\scriptsize{FER},  
        grid=both, grid style={gray!30}, ymode=log, log basis y={10},
        tick align=outside, tickpos=left, legend columns=1,legend style={
        at={(0.50,1.50)},anchor=center},
        cycle list = {
        {draw=colorblindfree11_1,mark=square*, mark options = {fill=colorblindfree11_1}},
        {draw=colorblindfree11_10,mark=star,mark options = {fill=colorblindfree11_10}},
        {draw=colorblindfree11_2,mark=pentagon*,mark options ={fill=colorblindfree11_2}, dashed},
        {draw=colorblindfree11_11,mark=triangle*,mark options = {fill=colorblindfree11_11}, densely dotted},
        {draw=colorblindfree11_3, mark = diamond*, mark options = {fill= colorblindfree11_3}, dashed},
        {draw=colorblindfree11_9, mark = *, mark options = {fill= colorblindfree11_9}, densely dashed},
        {draw=colorblindfree11_8, mark = star, mark options = {fill= colorblindfree11_8}},
        {draw=colorblindfree11_4, mark = star, mark options = {fill= colorblindfree11_4}}
        }
        ]
        \addplot table [x=Eb/N0, y=BEGPSCL8L2DisCRC] {data/n_4096_R_0_5_CRC_16_dist_CRC.txt};
        \addplot table [x=Eb/N0, y=IDABEGPSCL8L2] {data/n_4096_R_0_5_CRC_16_dist_CRC.txt};
        \addplot table [x=Eb/N0, y=IDABEGPSCL8L2both] {data/n_4096_R_0_5_CRC_16_dist_CRC.txt};
        \addplot table [x=Eb/N0, y=BEGPSCL48L2] {data/n_4096_R_0_5_CRC_16_dist_CRC.txt};
      \end{axis}
    \end{tikzpicture}

\caption{FERs of different ways of applying IDA decoding on the BE GPSCL decoder with $S=2$ when decoding the $n=4096$ and $R=0.50$ polar code.}
\label{fig:FER_diff_IDA}
\end{figure}

For the second decoding attempt, the \gls{scl} decoder with a list size of $8$ is used since the second decoding attempt is rarely used, and reducing its list size has little impact on the average computational complexity.
Also, retaining the list size in the second decoding attempt can preserve the error correction ability of the \gls{be} decoder.
Fig.~\ref{fig:FER_diff_IDA} shows the \gls{fer} results of applying the \gls{ida} decoding only on the first decoding attempt and applying \gls{ida} decoding on both decoding attempts, and parameters for the \gls{ida} decoding are shown in Table~\ref{tab:IDA_param}.
At a \gls{fer} of $3\times10^{-3}$, applying \gls{ida} decoding to both decoding attempts results in a $0.15\text{ dB}$ degraded decoding performance compared to the \gls{be} \gls{gpscl} decoder.
We can conclude that using a decoder with a list size of $8$ in the second decoding attempt is the key to preserving the error correction performance.
Also, from the results of the \gls{ida} decoding, an accurate bias can be generated without fixing the list size in the first decoding attempt.
For the \gls{scl} decoder, the list size information is only needed starting from the first information bit.
Hence, the \gls{ida} decoding can run in parallel with the \gls{scl} decoding before reaching the first information bit.

As we can assume all-zeros codewords are transmitted~\cite{Li2023improvePAlist}, the average list size of the \gls{ida} decoding can be computed numerically~\cite{Li2023improvePAlist}.
However, when the code length $n$ and $\varphi$ are large, the binomial coefficient and the probability terms in the analytical expression can go beyond the representation range of the data type used by the numerical computation.
For example, in a Python implementation, a large binomial coefficient calculated by the math module can go over (i.e., overflow) the range of the floating-point representation used by the NumPy and scipy packages.
Also, a large exponent on the probability term will produce a result that is smaller than the smallest value that can be represented by the floating-point representation (i.e., underflow) in Python.
\begin{table}
    \centering
    \caption{Parameters for IDA decoding.}
    \begin{tabular}{@{}c c cc c cc c cc c@{}}
    \toprule
    &\multicolumn{2}{c}{$R=0.25$}&\phantom{a}&\multicolumn{2}{c}{$R=0.50$}&\phantom{a}&\multicolumn{2}{c}{$R=0.75$}\\
    \cmidrule{2-3} \cmidrule{5-6} \cmidrule{8-9}
    &$\gamma$ & $\varphi$ && $\gamma$ & $\varphi$  && $\gamma$ & $\varphi$\\
    \midrule
    $n=4096$ &  $0.50$ & $713$ && $0.25$ & $152$ && $0.25$ & $50$\\
    $n=8192$ & $0.25$ & $750$ && $0.25$ & $328$ && $0.25$ & $112$\\ 
    \bottomrule
    \end{tabular}
    \label{tab:IDA_param}
\end{table}

Instead of the direct computation, the probability $\delta$ of using a small list size can be computed in the log domain:
\begin{equation}
\begin{split}
    \delta &= \sum_{i=0}^{\varphi-1} \binom{n}{i}P(|l| > \gamma| x=1)^{n-i}P(|l| \leq \gamma| x=1)^{i}\\
    & = \sum_{i=0}^{\varphi-1} 10^{f(\gamma, i, n, l)}\text{,}
\end{split}
\label{eqn:IDA_exp_L_log10}
\end{equation}
\begin{equation}
\begin{split}
    &f(\gamma, i, n, l) = \log_{10}\left( \binom{n}{i} \right) + i\log_{10}\left(P(|l| \leq \gamma| x=1)\right) \\
    &+ \left( n-i \right) \log_{10}\left(P(|l| > \gamma| x=1)\right)\text{,}
\end{split}
\label{eqn:IDA_exp_L_log10_sub}
\end{equation}
where $x=1$ is the \gls{bpsk} modulated symbol of the code bit $0$.
To consider the computational complexity induced by the second decoding attempt, the average list size of the \gls{ida} \gls{be} \gls{gpscl} is defined as the total list size of the first decoding plus the total list size of the second decoding, and then divided by the number of simulated frames.
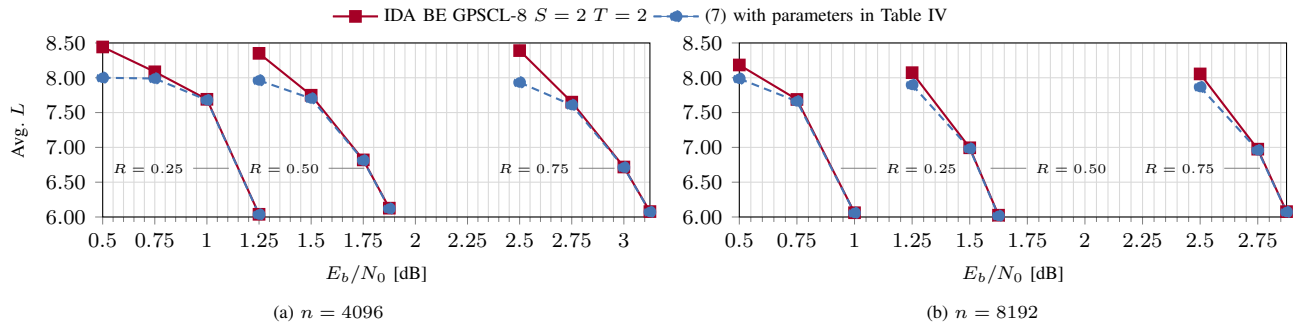
\begin{figure*}[ht]
\centering
\begin{tikzpicture}
  \begin{customlegend}[legend columns=3,legend style={at={(0.0,-5.0)},anchor=south west, align=left,draw=none},
  legend entries={
 \scriptsize{\gls{ida} \gls{be} \gls{gpscl}-$8$ $S=2$ $T=2$}, \scriptsize{\eqref{eqn:IDA_exp_L_log10} with parameters in Table~\ref{tab:IDA_param}}
  }
  ]
  \addlegendimage{draw=colorblindfree11_1, mark=square*, fill=colorblindfree11_1}
  \addlegendimage{draw=colorblindfree11_10,mark=*,fill=colorblindfree11_10, densely dashed}
  \addlegendimage{draw=colorblindfree11_2,mark=pentagon*,fill=colorblindfree11_2}
  \addlegendimage{draw=colorblindfree11_11,mark=triangle*,fill=colorblindfree11_11, densely dashed}
  \addlegendimage{draw=colorblindfree11_3,mark=diamond*,fill=colorblindfree11_3}
  \addlegendimage{draw=colorblindfree11_9,mark=star,fill=colorblindfree11_9, densely dashed}

  \addlegendimage{draw=colorblindfree11_4,fill=colorblindfree11_4}
  \addlegendimage{draw=colorblindfree11_8,fill=colorblindfree11_8}

  \end{customlegend}
\end{tikzpicture}
\vspace{-1.5em}

\captionsetup[subfigure]{font=scriptsize,labelfont=scriptsize}
\subfloat[\scriptsize{$n=4096$}]{
    \centering
    \begin{tikzpicture}[spy using outlines={circle,black,magnification=2, connect spies}]
      \begin{axis}[
        footnotesize, width=\columnwidth, height=.42\columnwidth,     
        xmin=0.5, xmax=3.125, xtick={0.50,0.75,...,3.25}, minor x tick num=4,
        ymin=6,  ymax=8.5,
        xlabel=\scriptsize{$E_b/N_0 \text{ [dB]}$}, ylabel=\scriptsize{Avg. $L$},  ylabel near ticks,
        grid=both, grid style={gray!30}, 
        y tick label style={
                /pgf/number format/fixed,
                /pgf/number format/fixed zerofill,
                /pgf/number format/precision=2
            },
        tick align=outside, tickpos=left, legend columns=1,legend style={
        at={(0.50,1.50)},anchor=center},
        cycle list = {
        {draw=colorblindfree11_1,mark=square*, mark options = {fill=colorblindfree11_1}},
        {draw=colorblindfree11_10,mark=*,mark options = {fill=colorblindfree11_10}, densely dashed}
        }
        ]
        \addplot table [x=0.25Eb/N0, y=0.25AvgL] {data/n_4096_CRC16_IDA_Avg_L.txt};
        \addplot table [x=0.25Eb/N0, y=0.25ExpL] {data/n_4096_CRC16_IDA_Avg_L.txt};
        \node [pin=180:\tiny{$R=0.25$}] at (1.15,6.7) {};
        \addplot table [x=0.50Eb/N0, y=0.50AvgL] {data/n_4096_CRC16_IDA_Avg_L.txt};
        \addplot table [x=0.50Eb/N0, y=0.50ExpL] {data/n_4096_CRC16_IDA_Avg_L.txt};
        \node [pin=180:\tiny{$R=0.50$}] at (1.8,6.7) {};
        \addplot table [x=0.75Eb/N0, y=0.75AvgL] {data/n_4096_CRC16_IDA_Avg_L.txt};
        \addplot table [x=0.75Eb/N0, y=0.75ExpL] {data/n_4096_CRC16_IDA_Avg_L.txt};
        \node [pin=180:\tiny{$R=0.75$}] at (3.0,6.7) {};
      \end{axis}
    \end{tikzpicture}
}
\subfloat[\scriptsize{$n=8192$}]{
    \centering
    \begin{tikzpicture}[spy using outlines={circle,black,magnification=2, connect spies}]
      \begin{axis}[
        footnotesize, width=\columnwidth, height=.42\columnwidth,     
        xmin=0.50, xmax=2.875, xtick={0.50,0.75,...,3.00},
        minor x tick num=4,
        ymin=6,  ymax=8.5,
        xlabel=\scriptsize{$E_b/N_0 \text{ [dB]}$}, 
        grid=both, grid style={gray!30}, 
        y tick label style={
                /pgf/number format/fixed,
                /pgf/number format/fixed zerofill,
                /pgf/number format/precision=2
            },
        tick align=outside, tickpos=left, legend columns=1,legend style={
        at={(0.50,1.50)},anchor=center},
        cycle list = {
        {draw=colorblindfree11_1,mark=square*, mark options = {fill=colorblindfree11_1}},
        {draw=colorblindfree11_10,mark=*,mark options = {fill=colorblindfree11_10}, densely dashed}
        }
        ]
        \addplot table [x=0.25Eb/N0, y=0.25AvgL] {data/n_8192_CRC16_IDA_Avg_L.txt};
        \addplot table [x=0.25Eb/N0, y=0.25ExpL] {data/n_8192_CRC16_IDA_Avg_L.txt};
        \node [pin=0:\tiny{$R=0.25$}] at (0.90,6.7) {};
        \addplot table [x=0.50Eb/N0, y=0.50AvgL] {data/n_8192_CRC16_IDA_Avg_L.txt};
        \addplot table [x=0.50Eb/N0, y=0.50ExpL] {data/n_8192_CRC16_IDA_Avg_L.txt};
        \node [pin=0:\tiny{$R=0.50$}] at (1.55,6.7) {};
        \addplot table [x=0.75Eb/N0, y=0.75AvgL] {data/n_8192_CRC16_IDA_Avg_L.txt};
        \addplot table [x=0.75Eb/N0, y=0.75ExpL] {data/n_8192_CRC16_IDA_Avg_L.txt};
        \node [pin=180:\tiny{$R=0.75$}] at (2.8,6.7) {};
      \end{axis}
    \end{tikzpicture}
}
\caption{Average list sizes (Avg. $L$) of the IDA decoding.}
\label{fig:IDA_SCL_Avg_L}
\end{figure*}

Table~\ref{tab:IDA_param}, Fig.~\ref{fig:partition_BE_SCL}, and Fig.~\ref{fig:IDA_SCL_Avg_L} show parameters used by the \gls{ida} decoding, the \gls{fer} results, and the average list size respectively.
Because a small $\gamma$ value requires a smaller $\varphi$ value than its large counterpart to reach the target expected list size, we choose to use a small $\gamma$ in this work, a set of parameter pairs is generated by~\eqref{eqn:IDA_exp_L_log10} given a target average list size, and parameter pairs in Table~\ref{tab:IDA_param} return good decoding performance according to Fig.~\ref{fig:partition_BE_SCL}.
A small $\varphi$ value implies that a small bit-width is required to represent the $\varphi$ in the quantized model.
We set the parameters $\gamma$ and $\varphi$ such that an average list size of $6$ is achieved at the endpoint of our simulation range.
From Fig.~\ref{fig:partition_BE_SCL} (d), a $0.05\text{ dB}$ degradation is observed when decoding polar codes with $n=8192$ and $R=0.25$ using the \gls{ida} \gls{be} \gls{gpscl} decoder.
For all other code lengths and rates, the degradation is less than $0.05\text{ dB}$.
From Fig.~\ref{fig:IDA_SCL_Avg_L} (a), the average list size returned from the simulations matches the expected list size computed by the new analytical expression~\eqref{eqn:IDA_exp_L_log10} when $E_{b}/N_{0}>1.75 \text{ dB}$ and decoding $n=4096$ and $R=0.50$ polar codes because the number of decoding attempts is close to one in this high $E_{b}/N_{0}$ region.
The same trend exists for codes with different lengths and rates.

In the quantized algorithm, floating-point numbers are rounded to the nearest fixed-point representation.
Hence, the rounding error $\epsilon$ should be taken into account when computing the average list size.
The rounding error $\epsilon$ in this work is half of the smallest value of the fixed-point representation, since the floating-point number is rounded to the nearest fixed-point representation in this work.
The smallest value is $2^{-2} = 0.25$ in this work because two bits are used to represent the fractional part, and $\epsilon = 0.125$.
We denote the \gls{llr} threshold for the quantized algorithm as $\gamma^{'}$.
Then, the parameter $\varphi$ should be chosen under the \gls{llr} threshold $\gamma=\gamma^{'}+\epsilon$ to reach the desired list size.

\begin{table}[t]
    \centering
    \caption{Parameters for quantized IDA decoding.}
    \begin{tabular}{@{}c c cc c cc c cc c@{}}
    \toprule
    &\multicolumn{2}{c}{$R=0.25$}&\phantom{a}&\multicolumn{2}{c}{$R=0.50$}&\phantom{a}&\multicolumn{2}{c}{$R=0.75$}\\
    \cmidrule{2-3} \cmidrule{5-6} \cmidrule{8-9}
    &$\gamma$ & $\varphi$ && $\gamma$ & $\varphi$  && $\gamma$ & $\varphi$\\
    \midrule
    $n=4096$ &  $0.625$ & $888$ && $0.375$ & $229$ && $0.375$ & $75$\\
    $n=8192$ & $0.375$ & $1123$ && $0.375$ & $492$ && $0.375$ & $168$\\ 
    \bottomrule
    \end{tabular}
    \label{tab:IDA_param_quantized}
\end{table}

\begin{table}[t]
    \centering
    \caption{Maximum reductions in computational complexity, which is derived from Fig.~\ref{fig:Avg_op_Quant} and is compared to the SCL-$16$ decoder.}
    \begin{tabular}{@{}l c cc c cc c cc c@{}}
    \toprule
    &\multicolumn{2}{c}{$R=0.25$}&\phantom{a}&\multicolumn{2}{c}{$R=0.50$}&\phantom{a}&\multicolumn{2}{c}{$R=0.75$}\\
    \cmidrule{2-3} \cmidrule{5-6} \cmidrule{8-9}
    &$4096$ & $8192$ && $4096$ & $8192$ && $4096$ & $8192$\\
    \midrule
    \gls{be} &  $3.5\times$ & $3.3\times$ && $3.7\times$ & $3.6\times$ && $4.0\times$ & $3.9\times$\\
    \gls{ida} \gls{be} & $4.7\times$ & $4.4\times$ && $5.1\times$ & $4.9\times$ && $5.4\times$ & $5.2\times$\\ 
    \bottomrule
    \end{tabular}
    \label{tab:max_red}
\end{table}

Table~\ref{tab:IDA_param_quantized}, Fig.~\ref{fig:partition_BE_SCL_quantized}, and Fig.~\ref{fig:IDA_SCL_Avg_L_Quant} show parameters used in the quantized algorithm,  the \gls{fer} of the quantized \gls{ida} \gls{be} \gls{gpscl} decoder, and the average list size respectively.
From Fig.~\ref{fig:partition_BE_SCL_quantized} (b), the quantized \gls{ida} decoding has a $0.05\text{ dB}$ loss in the error correction performance when decoding $n=8192$ and $R=0.25$ polar codes, which is consistent with the loss in the floating-point results shown in Fig~\ref{fig:partition_BE_SCL} (d).
For all other code lengths and rates, the quantized \gls{ida} decoding has less than $0.05\text{ dB}$ loss in the error correction performance compared to the quantized \gls{be} \gls{gpscl} decoder.
From Fig.~\ref{fig:IDA_SCL_Avg_L_Quant}, the computed average list size matches the list sizes returned from the simulations in the high $E_{b}/N_{0}$ region.

\begin{table}[t]
    \centering
    \caption{Latency $\nu$ for IDA decoding and the latency $\Delta$ before reaching the first information bit.}

}
\caption{FERs of decoding polar codes using the decoders with the list sphere decoding \gls{pm}~\eqref{eqn:FSSCLSPCListSphere}~\cite{hashemi2017FastSCL} and the exact \gls{pm}~\eqref{eqn:FSSCLSPCcalibrated} respectively.}
\label{fig:cali_BE_SCL}
\end{figure*}

The number of floating-point operations, which are the number of additions and comparisons, is used to calculate the computation complexity of the \gls{scl} decoder in~\cite{doan2022SPSCLrm}.
In this work, we also count the number of fixed-point operations, which are the number of additions, the number of comparisons, and the number of selections, used by the decoders to calculate the computational complexity.
In the fast \gls{scl} decoder, the selection operation for a \gls{pm} can be viewed as comparing the number of relatively higher \gls{pm}s to the list size $L$.
If the number of relatively higher \gls{pm}s is larger than $L$, then this \gls{pm} is one of the top $L$ \gls{pm} out of the $2L$ possible paths, and it will be kept in the list.
Hence, the selection operation can be viewed as a comparison.

The min-sum approximation used by the left tree traversal needs one comparison, and the right tree traversal uses one addition.
In special nodes, an addition is needed to update the \gls{pm} with a zero or a penalty.
For \gls{spc} nodes, one addition is required to initialize the \gls{pm}, and two additions are required to update the \gls{pm}.
In the fast \gls{scl} decoder, the sorter compares all \gls{pm} with each other simultaneously~\cite{hashemi2017FastSCL}, let the number of \gls{pm} be $L_{\text{sorter}}$, and each \gls{pm} uses $L_{\text{sorter}}-1$ comparisons and one extra selection to check whether it is the top $L$ candidates.
Hence, the sorter requires $L_{\text{sorter}}^{2}$ comparisons and selection.
Biased-enhancement takes $n$ additions.
The number of operations required by the \gls{ida} decoding composes of: $n$ comparisons on the received \gls{llr} vector, $\sum_{i=0}^{\log_{2}\left( n\right)-1}2^{i}=n-1$ additions for counting the number of unreliable \gls{llr}s, and one comparison with $\varphi$.
Fig.~\ref{fig:Avg_op_Quant} shows the average number of operations, and the maximum reduction is shown in Table~\ref{tab:max_red}.
The \gls{be} \gls{gpscl} decoder with a list size of $8$ has at most $4\times$ fewer operations than the \gls{scl} decoder with a list size of 16.
The reason is that the number of sorting operations increases quadratically with the list size.
A reduction in the computational complexity of up to $5.4\times$ is returned by the \gls{ida} \gls{be} \gls{gpscl} with a list size of $8$ compared to the \gls{scl} decoder with a list size of $16$.

The latency of the \gls{scl} decoder is modeled as follows.
The latency of decoding special nodes follows the metric in~\cite{SRSCL}.
Left and right tree traversal, and re-encoding take one clock cycle.
The \gls{ida} decoding uses one clock cycle for \gls{llr} comparisons, $\log_{2}(n)$ cycles in the adder tree for counting, and one clock cycle to compare with $\varphi$.
The latency is shown in Table~\ref{tab:IDA_overhead}.
When running the \gls{ida} decoding in parallel with the \gls{scl} decoder, we can see that no overhead is produced except for polar codes with $n=8192$ and $R=0.75$, and this exception is highlighted in the bold font in Table~\ref{tab:IDA_overhead}.

\section{The Calibrated Path Metric on SPC Nodes and the Theoretical Analysis on the Bias}
\label{sec:ana_pm_bias}
In this section, we investigate the source of error due to the approximated \gls{pm}~\cite{hashemi2016SSCL} for the \gls{spc} node in the fast \gls{scl} decoder~\cite{hashemi2017FastSCL}.
Approximation errors, such as the min-sum approximation~\cite{chen2005reduced,chisnevski2025minsumpolar}, are analyzed when designing decoders for polar codes and \gls{ldpc} codes.
We find that using a calibrated \gls{pm}, which is equivalent to the Chasing decoding metric, in \gls{spc} nodes returns better error correction performance while using the same complexity as using the approximated \gls{pm}~\cite{hashemi2017FastSCL}.
Based on this calibrated \gls{pm} on the \gls{spc} nodes, we prove that the bias in the \gls{be} decoder moves the received \gls{llr} toward valid polar codewords with a high likelihood, which explains the improved \gls{fer} brought by the \gls{be} \gls{scl} decoder.
\subsection{Analysis of Path Metrics for \gls{spc} Nodes}

A list sphere decoding approach~\cite{sphere_polar}, which is a hardware-friendly variant of the maximum-likelihood sphere decoder for polar codes~\cite{hashemi2015listSD}, is used to decode the \gls{spc} nodes~\cite{hashemi2016SSCL}.
The lossless decoding approach is to evaluate the parity check of two bits simultaneously and generate bit flips that satisfy the parity check~\cite{hashemi2016SSCL}.
However, the lossless approach requires $\binom{N_{v}}{2}$ time steps for a \gls{spc} node with a length of $N_{v}$, which will cause a large decoding latency when $N_{v}\geq 8$~\cite{hashemi2016SSCL}.
An approximated decoding approach, which only considers two-bit combinations of the least reliable bit and all other bits, is proposed in~\cite{hashemi2016SSCL}. 
Later, it is proved in~\cite{hashemi2017FastSCL} that only the first $\min\left(L, N_{v}\right)$ least reliable code bits should be enumerated for the \gls{spc} nodes, which reduces the time steps to $\min\left(L, N_{v}\right)$.
However, the error induced by this approximated list sphere decoding has not been investigated in the literature.

For the approximated \gls{pm} update, given the \gls{pm} ($\text{PM}_{-1}$) from the previous node, the \gls{spc} node's \gls{pm} is initialized by
\begin{equation}
    \text{PM}_{0} =
    \begin{cases}
        \text{PM}_{-1} + |\alpha_{i_{\min}}|\text{, if }\Gamma=1\text{,}\\
        \text{PM}_{-1}\text{, otherwise,}
    \end{cases}
    \label{eqn:SPC_pm_ini}
\end{equation}
where $\alpha_{i_{\min}}\in \bm{\alpha}$ is the \gls{llr} with the smallest magnitude in the \gls{llr} vector $\bm{\alpha}$, $i_{\min}$ is the index of this least reliable \gls{llr}, and $\Gamma$ is the parity check.
For every enumerated bit $i$, given the parity check $\Gamma$ of the hard decision of the received \gls{llr} vector $\bm{\alpha}$ and the new hard estimation $\beta_{i}$ of $\alpha_{i}$, the \gls{pm} is updated by 
\begin{equation}
    \text{PM}_{i} = 
    \begin{cases}
        \text{PM}_{i-1} + |\alpha_{i}| + \left( -1 \right)^{\Gamma} |\alpha_{i_{\min}}|\text{, }\left( -1\right)^{\beta_{i}} \neq \text{sign}\left( \alpha_{i} \right)\text{,}\\
        \text{PM}_{i-1}\text{, otherwise.}
    \end{cases}
    \label{eqn:FSSCLSPCListSphere}
\end{equation}
If we reformulate the update of the \gls{pm}, we can see that
\begin{equation}
    \text{PM}_{i}=\text{PM}_{0} + \sum_{j \in \mathbbm{j}} |\alpha_{j}| + \left( -1\right)^{\Gamma}\left|  \mathbbm{j} \right| \left| \alpha_{i_{\min}} \right|\text{,}
\end{equation}
where $\mathbbm{j}$ is the set of indices of flipped bits.
This update is only exact when the list size $L=2$~\cite{hashemi2016SSCL}.
When the list size $L>2$, candidates from paths with $\Gamma=1$ are more likely to be selected when using this approximated \gls{pm} update because they tend to have a smaller \gls{pm}.

The \gls{pm} update for the approximated decoding can be calibrated to the exact \gls{pm}, which is equivalent to Chase decoding for \gls{spc} nodes~\cite{sarkis2016FastSCL}, by
\begin{equation}
\begin{split}
    &\text{PM}_{i} =\\ 
    &\begin{cases}
        \text{PM}_{i-1} + |\alpha_{i}| -\left( -1 \right)^{\Gamma} (-1)^{\text{wt}\left(\text{diff}\right)} |\alpha_{i_{\min}}|\text{, if } \text{flip}=\text{true}\text{,}\\
        \text{PM}_{i-1}\text{, otherwise,}
    \end{cases}
\end{split}
\label{eqn:FSSCLSPCcalibrated}
\end{equation}
$\text{flip}:=\left( -1\right)^{\beta_{i}} \neq \text{sign}\left( \alpha_{i} \right)$, and
\begin{align*}
    \text{diff}:=\left\{\left|\beta_{i} - \text{HD}\left( \alpha_{i}\right)\right|\right | i\in \{1, 2, ..., N_{v} \}\setminus i_{\text{min}}\}
\end{align*}
is the bit-flip pattern applied, and $\text{wt}\left( \text{diff}\right)$ is the number of flipped bits.
The following proposition shows the equivalence.
\begin{prop}
    The \gls{pm} update~\eqref{eqn:FSSCLSPCcalibrated} is equivalent to the \gls{pm} update according to the Chase decoding in~\cite{sarkis2016FastSCL}.
    \label{prop:calibrated_PM_Chase_PM}
\end{prop}
\begin{proof}
    From~\eqref{eqn:SPC_pm_ini}, it can be checked that the \gls{pm} update for the \gls{spc} node is firstly initialized according to the parity-check result and the least reliable \gls{llr}.
    
    Then, according to~\eqref{eqn:FSSCLSPCcalibrated}, the \gls{pm} for the \gls{spc} node is adaptively adjusted by a value of $|\alpha_{\text{min}}|$ according to the enumeration patterns applied on \gls{llr}s except the \gls{llr} with the smallest magnitude such that the parity-check constraint is always satisfied.
    
    Also, in~\eqref{eqn:FSSCLSPCcalibrated}, the \gls{pm} is updated according to whether the bit estimation has the same sign as the \gls{llr} or not.
    Hence, we can conclude that the \gls{pm} update~\eqref{eqn:FSSCLSPCcalibrated} is equivalent to the \gls{pm} update according to the Chase decoding metric in~\cite{sarkis2016FastSCL}.
\end{proof}
Open-source implementations like~\cite{yongrunpolar} and the other work~\cite{zhao2023calipm} have adopted this calibrated update for the \gls{spc} nodes.
When the parity check $\Gamma=1$, the penalty caused by flipping the least reliable bit is deducted if an odd number of bit flips is applied on bits other than the least reliable bit.
The penalty caused by flipping the least reliable bit is included if an even number of bit flips is applied on bits other than the least reliable bit.
When the parity check $\Gamma=0$, the opposite operation is performed.

Comparisons, which are missing from the literature, between the approximated and the exact \gls{pm}s are shown in Fig.~\ref{fig:cali_BE_SCL}.
From Fig.~\ref{fig:cali_BE_SCL} (b), the \gls{scl} and \gls{be} \gls{gpscl} decoder with the exact \gls{pm}~\eqref{eqn:FSSCLSPCcalibrated} returns better error correction performance than their counterpart using the approximated \gls{pm}~\eqref{eqn:FSSCLSPCListSphere} when decoding length $4096$ and rate $0.5$ polar codes.
However, the improvement in the error correction is less than $0.05\text{ dB}$ using our simulation setting.
The same trend exists for all other code lengths and rates.
We provide a proof of the maximum number of bit estimations under this exact \gls{pm}~\eqref{eqn:FSSCLSPCcalibrated} in Theorem~\ref{thm:spcPMcali}.
By Theorem~\ref{thm:spcPMcali}, the exact \gls{pm} and the approximated \gls{pm} have the same time step for \gls{spc} nodes~\cite{hashemi2017FastSCL}.
\begin{theorem}
    In the \gls{scl} decoding, the maximum number of bit estimations in a \gls{spc} node with a length $N_{v}$ and the \gls{pm}~\eqref{eqn:FSSCLSPCcalibrated} is bounded by $\min\left(L, N_{v}\right)$.
    \label{thm:spcPMcali}
\end{theorem}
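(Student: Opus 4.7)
The plan is to first reduce~\eqref{eqn:FSSCLSPCcalibrated} to a closed-form Chase metric and then apply a swap argument on the ranked \gls{llr} magnitudes.

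I would first show by induction on the enumeration step that every path produced by~\eqref{eqn:FSSCLSPCcalibrated} satisfies
\[
\text{PM}=\text{PM}_{-1}+\sum_{j\in S}|\alpha_{j}|,
\]
where $S\subseteq\{1,\dots,N_{v}\}$ is the set of positions at which the emitted $\bm{\beta}$ differs from $\text{HD}(\bm{\alpha})$, with $|S|\equiv\Gamma\pmod{2}$. The factor $(-1)^{\Gamma}(-1)^{\text{wt}(\text{diff})}$ in~\eqref{eqn:FSSCLSPCcalibrated} precisely toggles whether the implicit parity-enforcing flip at $i_{\min}$ is charged to the metric, so the recurrence telescopes to the displayed Chase cost. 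The enumeration therefore reduces to searching for the $L$ valid subsets $S$ of smallest cost $\sum_{j\in S}|\alpha_{j}|$.

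Next I order the positions so that $|\alpha_{i_{1}}|\le\cdots\le|\alpha_{i_{N_{v}}}|$ with $i_{1}=i_{\min}$. For $N_{v}\le L$ the bound $\min(L,N_{v})=N_{v}$ is immediate. Otherwise, suppose a valid set $S$ contains some $i_{m}$ with $m>L$, and for each $k=1,\dots,L$ define $T_{k}:=S\triangle\{i_{k},i_{m}\}$. A short cardinality check gives $|T_{k}|\in\{|S|,|S|-2\}$, so each $T_{k}$ preserves parity $\Gamma$ and is valid; since $T_{k}\triangle T_{k'}=\{i_{k},i_{k'}\}$ and $T_{k}\triangle S=\{i_{k},i_{m}\}$, the $T_{k}$ are pairwise distinct and all differ from $S$. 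Direct computation yields
\[
\sum_{j\in T_{k}}|\alpha_{j}|-\sum_{j\in S}|\alpha_{j}|=\begin{cases}|\alpha_{i_{k}}|-|\alpha_{i_{m}}|,&i_{k}\notin S,\\ -|\alpha_{i_{k}}|-|\alpha_{i_{m}}|,&i_{k}\in S,\end{cases}
\]
which is non-positive in either branch because $k\le L<m$ forces $|\alpha_{i_{k}}|\le|\alpha_{i_{m}}|$. Thus $S$ is weakly dominated by $L$ distinct valid competitors, none of which contains $i_{m}$.

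Applying the swap to every candidate flipping a bit outside $\{i_{1},\dots,i_{L}\}$ produces a top-$L$ list whose flip sets are supported entirely on the $L$ least reliable positions, so the enumeration may safely terminate after $\min(L,N_{v})$ bits. I expect the main obstacle to be the tie case $|\alpha_{i_{k}}|=|\alpha_{i_{m}}|$, where the inequality in the swap becomes an equality: the standard Fast \gls{scl} tie-breaking convention of preferring lower-magnitude indices resolves this ambiguity and keeps the enumeration within $\{i_{1},\dots,i_{L}\}$ without enlarging the bit count.
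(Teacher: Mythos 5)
Your proof is correct, and it takes a genuinely different route from the paper's. The paper argues by exhibiting explicit dominating witnesses for each parity case: for $\Gamma=1$ it compares the single-flip metrics $|\alpha_{1}|\le\cdots\le|\alpha_{L}|$, and for $\Gamma=0$ it compares the empty flip and the pairs $\{i_{1},i_{k}\}$, concluding in both cases that $L$ candidates supported on the $L$ least reliable positions already undercut anything that touches position $L+1$ or beyond. You instead first prove the closed-form Chase cost $\text{PM}=\text{PM}_{-1}+\sum_{j\in S}|\alpha_{j}|$ with the parity constraint $|S|\equiv\Gamma\pmod 2$ --- something the paper asserts (via the equivalence to Chase decoding) but never derives from the recursion~\eqref{eqn:FSSCLSPCcalibrated} --- and then run a symmetric-difference exchange $T_{k}=S\triangle\{i_{k},i_{m}\}$ that dominates an \emph{arbitrary} multi-flip candidate uniformly in $\Gamma$. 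This buys you rigor the paper's version lacks: the paper only explicitly compares within the single-flip family (Case I) and the pair-with-$i_{1}$ family (Case II), leaving the domination of general multi-flip sets implicit, whereas your argument covers them all at once. What the paper's approach buys in exchange is concreteness (it tells you exactly which $L$ candidates survive) and brevity. Two small polish points on your write-up: the final "apply the swap to every candidate" step is slightly glossed, since a $T_{k}$ may still contain \emph{other} indices beyond $L$ when $S$ does; you should either iterate the exchange or observe that each swap strictly decreases the number of out-of-range indices, so the process terminates with a top-$L$ selection supported on $\{i_{1},\dots,i_{L}\}$. Also, the tie-breaking discussion is unnecessary: weak dominance already yields the existence of \emph{some} top-$L$ selection within the first $L$ positions, which is all the theorem requires.
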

\begin{proof}
    Assuming we have the sorted soft information $|\alpha_{1}|\leq |\alpha_{2}| \leq ... \leq |\alpha_{N_{v}}|$, and we have the following two cases:
    
    \textbf{I}), When $\Gamma = 1$, an odd number of bit flips is performed.
    Considering one bit is flipped, there are $L-1$ \gls{pm}s
    \begin{align*}
        \left\{ \text{PM}_{1}=|\alpha_{1}|, \text{PM}_{2}=|\alpha_{2}|, ..., \text{PM}_{L-1} = |\alpha_{L-1}| \right\}
    \end{align*}
    that are smaller than the \gls{pm} of flipping the $L$-th bit ($\text{PM}_{L}=|\alpha_{L}|$).
    Hence, bit flipping among the $L$ least reliable bits is sufficient to generate $L$ small \gls{pm}s.

    \textbf{II}), When ($\Gamma = 0$), either no bit flip or an even number of bit flips is performed.
    Considering flipping the first (least reliable bit) and all other bits, there are $L-1$ \gls{pm}s
    \begin{align*}
        \left\{\text{PM}_{1}=0, \text{PM}_{2}=|\alpha_{1}|+|\alpha_{2}|, ...,\text{PM}_{L-1} = |\alpha_{1}|+|\alpha_{L-1}|\right\}
    \end{align*}
    that are smaller than the \gls{pm} of flipping the first and the $L$-th bit ($\text{PM}_{L}=|\alpha_{1}|+|\alpha_{L}|$).
    Hence, bit flipping among the $L$ least reliable bits is sufficient to generate $L$ small \gls{pm}s.
     Hence, the maximum number of bit estimations is $\min\left(L, N_{v}\right)$.
\end{proof}
\subsection{Analysis of the Bias from the \gls{be} \gls{scl} Decoding}
It is shown in~\cite{yang2025biassuccessive} that the bias, which applies to the received \gls{llr} that has a small magnitude and a different sign than the decoded codeword from the \gls{sc} decoding, moves the received \gls{llr} toward the \gls{ml} solution (with a small deviation from the received 
\gls{llr} vector).
In this work, we prove that the bias applied to the \gls{be} \gls{scl} decoder moves the received \gls{llr} vector toward valid polar codewords with a small \gls{pm}, and the \gls{fer} gain of the \gls{be} \gls{scl} decoder is explained by the improved observable (\gls{llr}s) due to the bias.
\begin{theorem}
    The bias applied to the \gls{be} \gls{scl} decoder moves the received \gls{llr} vector toward valid polar codewords that have a high likelihood.
    \label{thm:be_trajectory}
\end{theorem}
\begin{proof}
    It is shown in~\cite{hashemi2016SSCL} that
    \begin{equation}
        \begin{split}
            & \frac{1}{2} \left( \text{sign}\left( \alpha_{0}^{l}\right) \alpha_{0}^{l} - \eta_{0}^{l} \alpha_{0}^{l} + \text{sign}\left( \alpha_{0}^{r}\right) \alpha_{0}^{r} - \eta_{0}^{r} \alpha_{0}^{r} \right)\\
            &= \frac{1}{2} \left( |\text{sgn}\left( \alpha_{0}^{l}\right) \alpha_{0}^{l} - \eta_{0}^{l} \alpha_{0}^{l}| + |\text{sgn}\left( \alpha_{0}^{r}\right) \alpha_{0}^{r} - \eta_{0}^{r} \alpha_{0}^{r}| \right)\\
            & = \frac{1}{2} \left( \text{sign}\left( \alpha_{1}\right) \alpha_{1} - \eta_{1} \alpha_{1} + \text{sign}\left( \alpha_{2}\right) \alpha_{2} - \eta_{2} \alpha_{2} \right)\\
            &= \frac{1}{2} \left( |\text{sign}\left( \alpha_{1}\right) \alpha_{1} - \eta_{1} \alpha_{1}| + |\text{sign}\left( \alpha_{2}\right) \alpha_{2} - \eta_{2} \alpha_{2}| \right)\text{,}
            \end{split}
            \label{eqn:recusive_update_rule_list}
        \end{equation}
        where $\alpha_{0}^{l}$ and  $\alpha_{0}^{r}$ are the \gls{llr}s for the left child and the right child respectively, $\eta_{0}^{l}$ and $\eta_{0}^{r}$ are the symbol ($\left\{ -1,1\right\}$) estimations of the left and right child, and  $\eta_{1}$ and $\eta_{2}$ are the symbol estimations for upper-stage \gls{llr}s $\alpha_{1}$ and $\alpha_{2}$ respectively.
        Also, by~\cite[Thm. 2, Thm. 4, Thm. 6]{hashemi2016SSCL}, the \gls{pm} update for the rate-0, rate-1, and the repetition nodes follow
        \begin{equation}
            \text{PM}_{i}=\text{PM}_{i-1}+\frac{1}{2}\sum_{i=1}^{N_{v}}\text{sign}\left(\alpha_{i}\right)\alpha_{i} - \eta_{i}\left(\alpha_{i}\right)\text{.} 
            \label{eqn:PM_update_general}
        \end{equation}
        We can see that the calibrated \gls{pm} update~\eqref{eqn:FSSCLSPCcalibrated} for the \gls{spc} node is equivalent to the Chase decoding metric from Proposition~\ref{prop:calibrated_PM_Chase_PM}, so the \gls{pm} update on the \gls{spc} node is equivalent to~\eqref{eqn:PM_update_general} according to~\cite{sarkis2016FastSCL}.

        Given that the \gls{pm} for the rate-0, rate-1, the repetition, and the \gls{spc} nodes is updated according to~\eqref{eqn:PM_update_general}~\cite{hashemi2016SSCL} and the recursive update~\eqref{eqn:recusive_update_rule_list}, the \gls{pm} of candidate codewords in the list returned from the \gls{scl} decoding is
        \begin{equation}
            \text{PM}=\frac{1}{2}\sum_{i=1}^{n}\text{sign}\left(l_{i}\right)l_{i} -1^{\hat{c}_{i_{p}}}\left(l_{i}\right)=\sum_{ i\in \mathcal{Z} } \left| l_{i}\right|\text{,} 
            \label{eqn:PM_update_general_equivalence}
        \end{equation}
        where $\mathcal{Z}:=\{i|-1^{\hat{c}_{i_{p}}}\neq \text{sign}(l_{i})\wedge i\in \left\{ 1, 2,...,n \right\} \}$,
        $\hat{c}_{i_{p}}$ is the $i$-th code bit estimation in the candidate codeword $p$, and $l_{i}$ is the $i$-th received \gls{llr}.
        According to~\cite{qu2024constituent}, the \gls{pm}~\eqref{eqn:PM_update_general_equivalence} is equivalent to the likelihood of the decoded codeword, and the \gls{ml} codeword will have the least \gls{pm}/the highest likelihood.

        According to~\eqref{eqn:PM_update_general_equivalence}, we can conclude the bias update~\eqref{eqn:perturb_value_bias} applied on the all-agreed and the all-disagreed code bits moves the received \gls{llr} vector toward valid polar codewords with a small \gls{pm} or, equivalently, a high likelihood.
\end{proof}

\section{Conclusion}
\label{sec:conclusion}
By ablation studies, we propose the \gls{be} \gls{scl} decoder that removes the random noise.
The \gls{gpscl} decoder is used to reduce the memory usage of the \gls{be} decoder.
Compared to the \gls{scl} decoder with a list size of $16$, the $67\%$ reduction in memory usage is achieved by the \gls{be} \gls{gpscl} decoder with a list size of $8$.
The \gls{ida} decoding is applied to reduce the computational complexity of the \gls{be} \gls{gpscl} decoder.
Compared to the \gls{scl} decoder with a list size of $16$, the \gls{ida} \gls{be} \gls{gpscl} decoder with a list size of $8$ has a reduction in computational complexity of up to $5.4\times$ while having at most $0.05\text{ dB}$ degraded decoding performance.
We theoretically prove that the bias in \gls{be} \gls{scl} moves the received \gls{llr} vector toward valid polar codewords with a high likelihood, and explain the \gls{fer} gain in the \gls{be} \gls{scl} decoder.

\FloatBarrier

\bibliographystyle{IEEEtran}
\bibliography{IEEEabrv,reference.bib}{}

\end{document}